\documentclass[11pt,lettersize]{article}

\usepackage{geometry}
\usepackage{algcompatible}
\usepackage{amssymb}

\usepackage{tikz}

\algnewcommand\algorithmicreturn{\textbf{return}}
\algnewcommand\RETURN{\algorithmicreturn}
\algnewcommand\algorithmicprocedure{\textbf{procedure}}
\algnewcommand\PROCEDURE{\item[\algorithmicprocedure]}%
\algnewcommand\algorithmicendprocedure{\textbf{end procedure}}
\algnewcommand\ENDPROCEDURE{\item[\algorithmicendprocedure]}%
\algnewcommand{\algvar}[1]{{\text{\ttfamily\detokenize{#1}}}}
\algnewcommand{\algarg}[1]{{\text{\ttfamily\itshape\detokenize{#1}}}}
\algnewcommand{\algproc}[1]{{\text{\ttfamily\detokenize{#1}}}}
\algnewcommand{\algassign}{\leftarrow}

\newtheorem{theorem}{Theorem}[section]
\newtheorem{definition}{Definition}[section]
\newtheorem{corollary}{Corollary}[section]

\newtheorem{lemma}{Lemma}[section]
\newtheorem{observation}{Observation}[section]

\newcommand{\qed}{\hfill\hbox{\rlap{$\sqcap$}$\sqcup$}}
\newenvironment{proof}{\noindent \emph{Proof.\,}}{\qed}

\usepackage{graphicx} 
\usepackage{amsmath,psfrag} 
\usepackage{algorithm,cleveref}
\usepackage{algpseudocode}
\usepackage{algpseudocodex}

\usepackage[leftcaption]{sidecap}

\usepackage{mathtools}

\usepackage{complexity}
\usepackage[misc]{ifsym}
\usepackage{commands}
\usepackage{cancel}
\usepackage{xspace}

\usepackage{commands}
\usepackage{todonotes}

\usepackage{standalone}
\usepackage{tikz}

\title{Computing and Enumerating Minimal Common Supersequences Between Two Strings} 

\author{
Braeden Sopp\footnote{Gianforte School of Computing, Montana State University, Bozeman, MT 59717, USA. Email: {\tt braeden.sopp@student.montana.edu}}
\and
Adiesha Liyanage\footnote{Gianforte School of Computing, Montana State University, Bozeman, MT 59717, USA. Email: {\tt a.liyanaralalage@montana.edu}}
\and
Mingyang Gong\footnote{Gianforte School of Computing, Montana State University, Bozeman, MT 59717, USA. Email: {\tt mingyang.gong@montana.edu}}
\and
Binhai Zhu\footnote{Gianforte School of Computing, Montana State University, Bozeman, MT 59717, USA.
Email: {\tt bhz@montana.edu}}
}

\date{}

\begin{document}

\maketitle

\begin{abstract}
Given \(k\) strings each of length at most $n$, 
computing the shortest common supersequence of them is a well-known NP-hard problem (when \(k\) is unbounded). On the other hand, when \(k=2\), such a shortest common supersequence can be computed in \(O(n^2)\) time using dynamic programming as a textbook example. In this paper, we consider the problem of computing a \emph{minimal} common supersequence and enumerating all minimal common supersequences for \(k=2\) input strings. Our results are summarized as follows.
\begin{enumerate}
    \item A minimal common supersequence of \(k=2\) input strings can be computed in $O(n)$ time. (The method also
    works when \(k\) is a constant).
    \item All minimal common supersequences between two input strings can be enumerated with a data structure of $O(n^2)$ space and an $O(n)$ time delay, and the data structure can be constructed in $O(n^3)$ time.
\end{enumerate}
\end{abstract}

\section{Introduction}

Computing the longest common subsequences (LCS) in two or more input strings is a classical problem which has found many applications. The textbook example is the longest common subsequence of two
strings (LCS-2) each of length at most $n$, which was solved by Wagner and Fischer in $O(n^2)$ time and space \cite{DBLP:journals/jacm/WagnerF74} and then in $O(n^2)$ time but $O(n)$ space by Hirschberg \cite{DBLP:journals/cacm/Hirschberg75}. (In 2015, the $O(n^2)$ time bound was shown to be conditionally optimal under the Strong Exponential Time Hypothesis \cite{DBLP:conf/focs/AbboudBW15}, SETH for short.)  It has been investigated and applied in various applications, sometimes with additional
constraint, e.g., LCS-2 must also contain a (non-contiguous) pattern $P$, which is a subsequence appearing in both of the input strings $X$ and $Y$. If $X$, $Y$ and $P$ are all of length $O(n)$, Tsai gave an $O(n^5)$ time algorithm \cite{DBLP:journals/ipl/Tsai03}. The bound was improved to $O(n^3)$ by Chin et al. \cite{DBLP:journals/ipl/ChinSFHK04} and also by Chen and Chao \cite{DBLP:journals/jco/ChenC11}, and the latter additionally considered the version where LCS-2 of $X$ and $Y$ must not include $P$ as a subsequence. 

The shortest common supersequence (SCS) problem is also a well-known problem.
SCS finds applications in job scheduling, data merging (DNA sequence merging) and in multiple sequence alignment. (In fact the first NP-hardness proof on multiple sequence alignment is exactly reduced from SCS \cite{DBLP:journals/jcb/WangJ94}.) SCS was first studied by Maier in 1978, as the complement to the LCS problem \cite{DBLP:journals/jacm/Maier78}. Of course, compared with the large number of references on the LCS research, those on SCS are quite limited. First of all, it is a also textbook example that SCS-$d$,
where $d$ is the number of input sequences, each of length at most $n$, is also polynomially solvable when $d$ is fixed (e.g., with dynamic programming). Maier proved that both SCS and LCS are NP-complete when the number of sequences is part of the input (i.e., is not fixed) \cite{DBLP:journals/jacm/Maier78}. (For LCS, his reduction is from Vertex Cover, which can be replaced by its dual Independent Set --- IS for short, hence the inapproximability is equivalent to that of Maximum Independent Set, MIS for short \cite{Zhu07}.) 

In 1995, Jiang and Li studied the approximability of LCS and SCS \cite{JL95}. For LCS, their result was very similar to that of Maier: they showed that LCS is as hard to approximate as Maximum Clique (which is complement to MIS). For SCS, they proved two inapproximability results:
(1) SCS does not have a PTAS unless P=NP, using the NP-hardness
proof by Timkovskii \cite{VT89}, (2) if $n$ is the number of input sequences for SCS, then there is no factor-$\log^{\delta}n$ approximation for some positive constant $\delta$, unless NP is in {\bf DTIME}$(2^{\text{polylog}~n})$. 
For SCS, Pietrzak was the first to study if it admits any FPT (fixed-parameter tractable) algorithm \cite{DBLP:journals/jcss/Pietrzak03}. He proved that when the
alphabet is of a constant size and the parameter is the number of input sequences, then the SCS problem is W[1]-hard.
Dondi and, more recently, Chen et al. gave additional results on the FPT tractability of the SCS problem \cite{DBLP:journals/jda/Dondi13,DBLP:conf/aaim/ChenJLWZ24}.

While the longest common subsequence problem is well-studied, in 2018 Sakai first proposed the {\em maximal} common subsequence problem between two strings $A$ and $B$ each of length at most $n$ \cite{Sakai18}. The motivation is that in many practical applications the longest common subsequence might not capture the optimal solution due to real constraints. Consequently, the
problem was solved in $O(n\log n)$ time \cite{Sakai18,Sakai19}. \footnote{As pointed by Hirota and Sakai in \cite{DBLP:journals/tcs/HirotaS25}, the claimed bounds there were lower, due to incorrectly counting the cost for data construction.} Then, Hirota and Sakai showed that with $k$ input strings, the maximal common subsequence can be computed in $O(kN\log N)$ time, where $N$ 
is the sum of the lengths of the $k$ strings \cite{HirotaSakai23}.

Almost around the same time, the enumeration of maximal common subsequences of two input strings have been studied by Conte et al. \cite{DBLP:journals/algorithmica/ConteGPU22}. In this case, the cost is a triple $(p,s,d)$, where $O(p)$ is the time to construct a data structure of size $O(s)$ such that each maximal common subsequence is enumerated with a delay time of $O(d)$. The result by Conte et al. is $(\sigma n^2\log n, n^2, \sigma n\log n)$ ($\sigma$ is the size of the alphabet) \cite{DBLP:journals/algorithmica/ConteGPU22}. In 2025, Hirota and Sakai improved this bound with three results: $(n^2,n^2,n)$, $(n^2,n,n\log n)$ and $(n^3,n^3,n)$; the last one, though not really an improvement, gives a conceptually simple characterization of maximal common subsequences through a DAG $G$ with a source $s$ and a sink $t$, where each $st$-path in $G$ corresponds to a unique maximal common subsequence \cite{DBLP:journals/tcs/HirotaS25}. 

In this paper, we study the minimal common supersequence (MCS) problem of two input strings $A$ and $B$, each of length at most $n$, which involves both computation and enumeration.
A naive idea for the computation problem is that given a maximal common subsequence of $A$ and $B$, one could obtain dually an MCS of $A$ and $B$ --- in $O(n\log n)$ time using Sakai's result.
However, we will present a linear time algorithm to compute an MCS of two input strings.

For the enumeration problem, it should be noted that there is no correspondence between the
maximal common subsequences and MCS's of $A$ and $B$. The following is an example: $A=xay$ and $B=zaw$, $a$ is the only
common subsequence of $A$ and $B$, but $A\circ B=xay\circ zaw$ is certainly an MCS of $A$ and $B$
which is not corresponding to $a$. Therefore, we need to exploit additional properties
to enumerate all MCS's of $A$ and $B$.

Given $k$ input strings each of length at most $n$, our results are summarized as follows:
\begin{enumerate}
    \item An MCS of \(k=2\) input strings can be computed in $O(n)$ time. 
    \item An MCS of $k\geq 3$ input strings can be computed in $O(kn(\log k+\log n))$ time.
    \item All MCS's between two input strings can be enumerated with a data structure of $O(n^2)$ space and an $O(n)$ time delay, and the data structure can be constructed in $O(n^3)$ time.
\end{enumerate}

The paper is organized as follows. In Section 2 we give necessary definitions. In Section 3 we explore some properties of MCS. In Section 4 we give a linear time algorithm for computing an MCS of two strings ($k$ strings
in Section 6). In Section 5, we present the algorithm and data structure to enumerate all MCS's between two strings. We conclude the paper in Section 7. 

\section{Preliminaries}

Let \(\Sigma\) be an alphabet and \(S\) be a sequence over \(\Sigma\). We denote the length of \(S\) by \(n = |S|\) and the size of the alphabet by \(m = |\Sigma|\). We write \([n] = \{1,2,\ldots,n\}\), and for each \(i \in [n]\), let \(S[i]\) denote the \(i\)-th character of \(S\). A sequence \(X\) is a \emph{subsequence} of \(S\) if there exist indices \(1 \leq i_1 < i_2 < \cdots < i_{|X|} \leq n\) such that \(X = S[i_1] S[i_2] \cdots S[i_{|X|}]\), in which case we write \(X \subseteq S\). Similarly, a sequence \(Y\) of length \(\ell \geq n\) over \(\Sigma\) is a \emph{supersequence} of \(S\) if there exist indices \(1 \leq j_1 < j_2 < \cdots < j_n \leq \ell\) such that \(S = Y[j_1] Y[j_2] \cdots Y[j_n]\), and we write \(Y \supseteq S\). We denote the empty string as \(\epsilon\). 

Given two positive integers \(1 \leq i < j \leq |S|\), we denote the substring of \(S\) that begins with \(S[i]\) and ends with \(S[j]\) by \(\substring{S}{i}{j}\). Note that we define \(\substring{S}{j}{i}= \epsilon\) when \(j > i\). Furthermore, \(\substring{S}{1}{j}\) is called a {\em prefix} of \(S\), while \(\substring{S}{i}{n}\) is called a {\em suffix} of \(S\).  We use the open interval notation to indicate the exclusive index substrings; \ie\(S\rointv{i}{j} = \substring{S}{i}{j-1}\), \(S\lointv{i}{j} = \substring{S}{i+1}{j}\), and \(S\ointv{i}{j} = \substring{S}{i+1}{j-1}\).

Given two sequences \(S\) and \(T\) over the alphabet \(\Sigma\), a sequence $C$ is a \emph{common subsequence} of $S$ and $T$ if $C \subseteq S$ and $C \subseteq T$. It is \emph{maximal} if no proper subsequence of $C$ is still a common subsequence of $S$ and $T$. The \emph{longest common subsequence (LCS)} is a common subsequence of maximum possible length. A sequence $C$ is a \emph{common supersequence} of $S$ and $T$ if $S \subseteq C$ and $T \subseteq C$. It is \emph{minimal} if no proper subsequence of $C$ is still a common supersequence of $S$ and $T$. The \emph{shortest common supersequence (SCS)} is a common supersequence of minimum possible length.


We define \(\mcsupenum\) as the problem of enumerating all minimal common supersequences (MCS's) of two strings \(A\) and \(B\).
\begin{definition}[\(\mcsupenum\)]
    Given two strings \(A\) and \(B\) over an alphabet \(\Sigma\), the \(\mcsupenum\) problem asks to enumerate the set \(\mcsup(A, B)\) of supersequences of \(A\) and \(B\) that are minimal with respect to the subsequence relation.
    \label{def:mcsup}
\end{definition}

We make the following observation about any common supersequence \(S\) of the strings \(A\) and \(B\).

\begin{observation}
Any common supersequence $S$ of two strings $A$ and $B$ is either a minimal common supersequence or $S$ can be reduced to a minimal common supersequence $S'$ of $A$ and $B$ by deleting some letters.    
\end{observation}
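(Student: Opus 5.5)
The plan is a finite-descent (well-ordering) argument on length. Fix a common supersequence \(S\) of \(A\) and \(B\), and consider the set \(\mathcal{F}\) of all subsequences \(S'\subseteq S\) that are still common supersequences of \(A\) and \(B\). This set is nonempty, since \(S\in\mathcal{F}\). It is also finite, because \(S\) has at most \(2^{|S|}\) subsequences. Choose \(S'\in\mathcal{F}\) of minimum length. By construction, \(S'\) is obtained from \(S\) by deleting some letters, possibly none.

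The next step is to show that \(S'\) is minimal in the sense of the definition in the preliminaries. Suppose a proper subsequence \(S''\) of \(S'\) were still a common supersequence of \(A\) and \(B\). The subsequence relation is transitive: composing the two increasing index embeddings \(S''\hookrightarrow S'\hookrightarrow S\) gives an increasing embedding \(S''\hookrightarrow S\). Hence \(S''\subseteq S\), so \(S''\in\mathcal{F}\). Since \(S''\) is a proper subsequence of \(S'\), we have \(|S''|<|S'|\), contradicting the choice of \(S'\). Therefore \(S'\) is a minimal common supersequence.

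Finally, split into cases. If no letters were deleted, then \(S'=S\) and \(S\) itself is minimal. Otherwise \(S\) reduces to the MCS \(S'\) by deleting letters. These are exactly the two alternatives in the observation.

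As an equivalent, more constructive version, one can repeatedly delete any single letter of \(S\) whose removal keeps both \(A\) and \(B\) as subsequences. This terminates within \(|S|\) steps because the length strictly decreases. The delicate point in that version is the stopping condition. When no single-letter deletion works, \(S'\) might still have a proper common-supersequence subsequence obtained by deleting several letters at once. This is handled by the fact that if \(S''\subsetneq S'\) is a common supersequence, then pick any \(S'''\) with \(S''\subseteq S'''\subseteq S'\) and \(|S'''|=|S'|-1\). Such an \(S'''\) exists: delete from \(S'\) one position not used by the embedding of \(S''\). Then \(S'''\) is a common supersequence, because it is a supersequence of \(S''\), which in turn contains \(A\) and \(B\) as subsequences. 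So \(S'''\) is a valid single-letter deletion, contradicting the stopping condition. The minimum-length argument above sidesteps this issue entirely, so it is the route I would write up.
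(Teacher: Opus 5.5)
Your proof is correct. The paper gives no proof of this observation at all: it treats it as evident, and it likewise states the single-deletion characterization (Lemma~\ref{lemma:everyinterspaceisclean}) without proof.

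Your main route chooses a minimum-length element of the finite, nonempty family of subsequences of \(S\) that are still common supersequences. This is the cleanest way to settle the observation itself. It uses only transitivity of \(\subseteq\) and the fact that a proper subsequence is strictly shorter, and it never needs to know whether single-letter deletions suffice.

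Your secondary, constructive route is closer to how the paper actually uses the observation. The reductions in Algorithm~\ref{alg:2-reduce} and Section~\ref{sec:kmcs} remove one non-essential position at a time. Your argument for the stopping condition is this: if some \(S''\subsetneq S'\) is a common supersequence, then deleting a position of \(S'\) not used by an embedding of \(S''\) already yields one. That is exactly the justification the paper omits in two places. It is needed for the nontrivial direction of Lemma~\ref{lemma:everyinterspaceisclean}. It is also needed for the step ``which is sufficient to show \(S\) is minimal'' in the proof of Lemma~\ref{lem:ess-iff}, and the correctness of both reduction algorithms rests on that lemma.

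So the minimum-length argument is the right write-up for the observation. The single-deletion argument is worth keeping as a separate lemma, since it fills that gap.
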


Following the above observation, we can state the following lemma about MCS \(S\) of two strings \(A\)  and \(B\).

\begin{lemma}
    Given strings \(A, B\), let \(i \in \intv{1}{|S|}\), we define \(S' = S\rointv{1}{i}\circ S\lointv{i}{|S|}\). We say that \(S\) is a minimal common supersequence of \(A\) and \(B\) if and only if \(\forall i \in \intv{1}{|S|}: S'(i) \text{~is not a common supersequence of } A \text{~and~}B.\)
    \label{lemma:everyinterspaceisclean}
\end{lemma}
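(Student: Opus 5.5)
The statement is implicitly about a common supersequence \(S\) of \(A\) and \(B\); the lemma asserts that minimality can be checked by single-letter deletions. I would prove the two directions separately. The key tool is a monotonicity fact: the subsequence relation is transitive. Hence if a string \(T\) is a common supersequence of \(A\) and \(B\) and \(T \subseteq T'\), then \(T'\) is also a common supersequence. This is immediate, since composing the two increasing index embeddings \(A \hookrightarrow T \hookrightarrow T'\) yields an increasing embedding \(A \hookrightarrow T'\), and likewise for \(B\).

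\emph{Only if.} Suppose \(S\) is minimal. Each \(S'(i) = S\rointv{1}{i}\circ S\lointv{i}{|S|}\) is obtained from \(S\) by deleting the single letter \(S[i]\). It is therefore a proper subsequence of \(S\), because it has length \(|S|-1\). By the definition of minimality, it is not a common supersequence of \(A\) and \(B\).

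\emph{If.} I would argue by contraposition. Suppose \(S\) is a common supersequence that is not minimal. Then some proper subsequence \(T \subsetneq S\) is a common supersequence of \(A\) and \(B\). Fix indices \(j_1<\cdots<j_{|T|}\) embedding \(T\) into \(S\). Since \(|T|<|S|\), some position \(i\in\intv{1}{|S|}\) is not among the \(j_t\). The same indices then embed \(T\) into \(S'(i)\), with each index shifted down by one if it exceeds \(i\), so \(T \subseteq S'(i)\). By the monotonicity fact, \(S'(i)\) is a common supersequence of \(A\) and \(B\), contradicting the hypothesis.

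The only step needing care is exhibiting the unused position \(i\) and checking that deleting it preserves the embedding of \(T\). I would handle this by the explicit re-indexing just described rather than appealing to the Observation, which is really a statement in the opposite direction. The case of a proper subsequence deleting many letters reduces to the single-deletion case through this intermediate string \(S'(i)\).
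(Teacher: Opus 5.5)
Your proof is correct. You were also right to make explicit the hypothesis that \(S\) is a common supersequence of \(A\) and \(B\). As literally stated, the ``if'' direction is false: any \(S\) with \(A \nsubseteq S\) satisfies the right-hand side trivially, since no \(S'(i) \subseteq S\) can contain \(A\), yet such an \(S\) is not a common supersequence at all.

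The paper gives no proof of this lemma. It only introduces it as ``following'' the Observation that a non-minimal common supersequence can be reduced to a minimal one by deleting letters. It later uses the same single-deletion criterion without argument in the reverse direction of Lemma~\ref{lem:ess-iff}. Your argument supplies exactly the step left implicit in both places: a proper sub-supersequence \(T \subsetneq S\) misses some position \(i\), the shifted embedding gives \(T \subseteq S'(i)\), and transitivity of \(\subseteq\) finishes.

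One small correction to your closing remark. The Observation does not really point in the opposite direction, since it also starts from a non-minimal \(S\). What it produces, though, is only a sub-supersequence obtained by possibly many deletions, and the definition of non-minimality already gives you that. So the unused-position re-indexing step is needed on either route, and your direct version is the cleaner one.
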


Moreover, we can generalize \cref{lemma:everyinterspaceisclean} for MCS's 
of \(k\) strings as well. We define an embedding of string \(A\) into \(B\) as follows.

\begin{definition}[Embedding of a string]
    Given strings \(A=a_1a_2\ldots a_n\) and \(B=b_1b_2\ldots b_m\) over an alphabet \(\Sigma\) such that \(B\) is a supersequence of \(A\), an embedding \(\Phi\) of \(A\) into \(B\) is a function \(\Phi: [n] \rightarrow [m]\) such that \(A[i] = B[\Phi[i]]\) and \(i < j \iff \Phi(i) < \Phi(j)\). Note that this is an injective function.     Whenever necessary, we extend the domain of an embedding to include \(0\) and \(|A|+1\) where \(\Phi(0) =0\) and \(\Phi(|A|+1)=|B|+1\).
\end{definition}
We consider two special embeddings, called the left and the right embeddings, defined as follows.

\begin{definition}[Left (Right) Embedding]
     Given strings \(A=a_1a_2\ldots a_n\) and \(B=b_1b_2\ldots b_m\) over an alphabet \(\Sigma\) such that \(B\) is a supersequence of \(A\) and an embedding \(\psi\) of \(A\) into \(B\), we say that \(\psi\) is the left (right) embedding if for all \(i \in [n],\) \(\psi(i)\) is defined to be the smallest (largest) index of \(B\) such that \(A[1\ldots i] \subseteq S[1\ldots \psi(i)]\) \((A[i\ldots n] \subseteq S[\psi(i)\ldots m])\). Further, we denote left (right) embedding of \(A\) into \(B\) using \(\lembd{A}{B}\) \((\rembd{A}{B})\).

\end{definition}

We use interval of real numbers to describe ranges of the string. 
Recall there are four types of intervals over real numbers with the forms
    \( (a,b), [a,b)\), \((a,b]\), or \([a,b]\)
    where \(a,b \in \mathbb{R}.\)
The value \(a\) is referred to as the \emph{left} endpoint 
and \(b\) as the \emph{right} endpoint.
We denote the set of all intervals by \(\setint\).
It is clear that \(\setint\) has a natural partial order inherited from 
the subset relation thus when we say an interval is maximal, we are referring to this ordering.

We say interval \(I\) \emph{contains no indices} if \(I \cap \Z = \emptyset\)
and we use the indices in \(I\) to refer to \(I \cap \mathbb{Z}\).
Given a string \(A \in \Sigma^{*}\), we use the indices of \(A\) to refer to the set \([|A|]\).
We write \(\interval{A}\) to denote the set
of intervals with endpoints only in \(\intv{0}{|A|+1}\) and \(\itv{A}\)
to denote the interval \([0,|A|+1]\).
Given \(I \in \interval{A}\), we write \(A[I]\) 
to denote the substring of \(A\) using indices of \(A\) in \(I\).

\section{Properties of Minimal Common Supersequences}

Before proceeding further, we give a strong characterization of minimal common supersequences in terms of \emph{essential} indices. An index in a supersequence is said to be essential for a string \(A\) if deleting that index from the supersequence yields a string that no longer contains \(A\) as a subsequence.

Further, an essential index for \(A\) in \(S\)  corresponds to a 
particular index of \(A\).

\begin{definition}\label{def:essential}
    Let \(A \subseteq S\) and \(i \in \intv{1}{|S|}\).
    We say \(i\) is essential for \(A\) in \(S\) if
    \(A \nsubseteq S' = \delchar{S}{i}\). 
    Further for \(j \in \intv{1}{|A|}\),
    we say \(i\) is essential for pair \((j, A)\) in \(S\) if
    for \(A' = \delchar{A}{j}\), 
    we have \(A' \subseteq S'\) and \(A \nsubseteq S'\).
    Note, \(i\) is essential for \(A\) in \(S\) if and only if
     \(i\) is essential for some pair \((j, A)\) in \(S\). 
\end{definition}

An example is as follows: $S=abcbacb$ and $A=abab$, index $i=5$ is essential for $A$, as $A\nsubseteq S'=abcbcb$. The index $i=5$ is also essential for $(3,A)$ as $A'=abb \subseteq S'$ but $A\nsubseteq S'$.


\begin{lemma}\label{lem:ess-iff}
    Let \(A_1, \ldots, A_k\) be strings in \(\Sigma^*\) and let \(S\) be one of their common supersequences.
 The string \(S\) is a minimal common supersequence if and only if 
 every index \(i \in \intv{1}{|S|}\)
 is essential for some \(A_j\).
\end{lemma}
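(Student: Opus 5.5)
The plan is to reduce both directions to the single-deletion criterion of \cref{lemma:everyinterspaceisclean}, which is stated for two strings but extends verbatim to \(k\) strings. The key fact behind that lemma is that the subsequence relation is transitive. Hence if some proper subsequence \(S''\) of \(S\) is a common supersequence, then so is any string \(S'\) with \(S'' \subseteq S' \subseteq S\). In particular, if \(S''\) omits position \(i\) of \(S\), then \(S' = S\rointv{1}{i}\circ S\lointv{i}{|S|}\) contains \(S''\) and is therefore still a common supersequence. So I would first state explicitly, for general \(k\), that \(S\) is minimal if and only if for every \(i\in\intv{1}{|S|}\) the string \(S'_i\) obtained by deleting index \(i\) fails to be a common supersequence of \(A_1,\ldots,A_k\).

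For the \((\Leftarrow)\) direction, I would assume every index \(i\) is essential for some \(A_j\). By \cref{def:essential}, \(A_j\nsubseteq S'_i\), so \(S'_i\) is not a common supersequence. Suppose for contradiction that some proper subsequence \(S''\subsetneq S\) were a common supersequence. Fix an embedding of \(S''\) into \(S\), and pick an index \(i\) of \(S\) not in its image; one exists because \(|S''|<|S|\). Then \(S''\subseteq S'_i\), and by transitivity every \(A_j\subseteq S'_i\). This contradicts the essentiality of \(i\).

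For the \((\Rightarrow)\) direction, I would assume \(S\) is minimal and fix an index \(i\). Since \(S'_i\) is a proper subsequence of \(S\), it is not a common supersequence. Hence there is some \(j\) with \(A_j\nsubseteq S'_i\), which is exactly the statement that \(i\) is essential for \(A_j\) in \(S\).

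The only delicate point is the transitivity-and-embedding step in \((\Leftarrow)\): a shorter common supersequence must skip at least one position of \(S\), and it then sits inside the corresponding single deletion. This is routine once phrased via an injective order-preserving embedding, as in the paper's definition of embedding. I would state it explicitly rather than cite \cref{lemma:everyinterspaceisclean} alone, since that lemma is stated only for two strings and without proof.
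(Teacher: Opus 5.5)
Your proof is correct and follows the paper's argument: both directions go through the single-index deletions $S'_i$, and your direct forward direction is the same as the paper's contrapositive (a non-essential index can be deleted while keeping a common supersequence). Your explicit embedding step in the reverse direction, that any shorter common supersequence omits some position $i$ of $S$ and hence lies inside $S'_i$, is exactly what the paper's ``which is sufficient to show $S$ is minimal'' leaves unstated.
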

\begin{proof}
    We prove the forward direction using the contrapositive.
    Let \(i \in \intv{1}{|S|}\) that is not essential for any \(A_j\) and
    consider \(S' = \delchar{S}{i}\). Note as \(A_j \subseteq S\) for any 
    \(j\), we have \(S'\) is one of their common supersequences and
    thus \(S\) is not minimal.

    For the reverse direction, note that if every index is essential then
    deleting any index forms \(S\) to produces a supersequence which does
    not contain some \(A_i\), which is sufficient to show \(S\) is minimal.
\end{proof}

The following lemma provides an easy condition to check
if an index of $S$ is essential for a pair on $A$.

\begin{lemma}\label{lem:ess-check}
    Let \(A \subseteq S\) and \(i \in \intv{1}{|S|}\) 
    and \(j \in \intv{1}{|A|}\).
    The index \(i\) is essential for pair \((j,A)\) in \(S\) if and only if
    \(\lembd{A}{S}(j) = i = \rembd{A}{S}(j)\).
\end{lemma}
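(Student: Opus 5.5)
Write \(L = \lembd{A}{S}\) and \(R = \rembd{A}{S}\). The plan rests on the standard greedy characterization of these embeddings. \(L(j)\) is the least position \(p\) with \(A[1\ldots j] \subseteq S[1\ldots p]\), and \(R(j)\) is the greatest position \(q\) with \(A[j\ldots |A|] \subseteq S[q\ldots |S|]\). Since \(L\) is the greedy leftmost embedding, for any embedding \(\Phi\) of \(A\) into \(S\) we have \(L(j) \le \Phi(j) \le R(j)\) for every \(j\). Also, \(A \subseteq S\) implies \(L(j) \le R(j)\). We use the extended convention \(L(0)=0\) and \(R(|A|+1)=|S|+1\).

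The key equivalence concerns \(S' = \delchar{S}{i}\). We claim that \(A[1\ldots j-1]\) embeds in \(S[1\ldots i-1]\) exactly when \(L(j-1) < i\), and that \(A[j+1\ldots |A|]\) embeds in \(S[i+1\ldots |S|]\) exactly when \(R(j+1) > i\). Both follow directly from the extremal definitions of \(L\) and \(R\).

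\emph{(\(\Leftarrow\)).} Suppose \(L(j)=i=R(j)\).

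First, \(A' = \delchar{A}{j} \subseteq S'\). We have \(L(j-1) < L(j) = i\) and \(R(j+1) > R(j) = i\). So \(A[1\ldots j-1]\) embeds before \(i\) and \(A[j+1\ldots]\) embeds after \(i\), and concatenating these gives an embedding of \(A'\) avoiding \(i\).

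Second, \(A \nsubseteq S'\). Suppose \(\Phi\) were an embedding of \(A\) into \(S\) that avoids position \(i\). Then \(L(j) \le \Phi(j) \le R(j)\) forces \(\Phi(j) = i\), a contradiction.

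\emph{(\(\Rightarrow\)).} Suppose \(i\) is essential for \((j,A)\).

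First, the embedding \(L\) must use position \(i\); otherwise \(L\) itself embeds \(A\) into \(S'\). Likewise \(R\) must use \(i\). So \(i = L(j_1) = R(j_2)\) for some \(j_1, j_2\), and since \(L \le R\) pointwise we get \(j_1 \ge j_2\).

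Second, we need \(j_1 = j_2 = j\), and this is the main obstacle. Because \(A' \subseteq S'\), there is an embedding \(\Psi\) of \(A'\) avoiding \(i\). Its restriction to \(A[1\ldots j-1]\) gives \(L(j-1) \le \Psi(j-1)\). Since \(L\) is greedy, embedding \(A[1\ldots j-1]\) also gives \(L(j-1) < i\) unless \(\Psi(j-1) > i\).

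Case \(\Psi(j-1) > i\). Then the prefix \(A[1\ldots j-1]\) already embeds avoiding \(i\) with its last letter past \(i\). Appending the suffix \(A[j\ldots]\), we would like to produce an embedding of \(A\) avoiding \(i\), contradicting essentiality. The cleaner route is to argue via the crossing point \(t\) of \(\Psi\) over \(i\):
\begin{itemize}
\item \(A'\) splits at \(t\) into a part before \(i\) and a part after \(i\).
\item If \(t \ne j\), then \(A\) itself splits around \(i\), since the letter \(A[j]\) can be routed on its side using the neighbour's slot. Concretely, if \(t < j\), the letters \(A[t+1\ldots j-1]\) and \(A[j+1\ldots]\) lie after \(i\); we map \(A[j]\) greedily using \(R\)-style rightmost placement and show everything fits.
\end{itemize}
I expect the precise version is the following: \(A \nsubseteq S'\) iff there is no \(t\) with \(L(t) < i < R(t+1)\), with indices taken in the extended range \(\intv{0}{|A|}\). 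Likewise, \(A' \subseteq S'\) iff such a split exists for \(A'\).

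Combining these two conditions with monotonicity gives \(L(j-1) < i < R(j+1)\). Also, for every \(t\), either \(L(t) \ge i\) or \(R(t+1) \le i\). Taking \(t = j-1\) yields \(R(j) \le i\), and taking \(t = j\) yields \(L(j) \ge i\). Together with \(L(j) \le R(j)\), this gives \(L(j) = i = R(j)\).

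So the core lemma to prove first is the split criterion: \(X \subseteq \delchar{S}{i}\) iff there exists \(t \in \intv{0}{|X|}\) with \(L_X(t) < i < R_X(t+1)\). Everything else is bookkeeping, and the only care needed is that the \(L\) and \(R\) values for \(A'\) relate to those of \(A\) on the relevant prefixes and suffixes.
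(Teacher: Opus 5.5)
Your ($\Leftarrow$) direction is correct and is the paper's argument. The ($\Rightarrow$) direction breaks at the step you flagged yourself: ``combining these two conditions with monotonicity gives $L(j-1) < i < R(j+1)$.''

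Applied to $A'$, the split criterion only yields some $s$ with $L_{A'}(s) < i < R_{A'}(s+1)$, where $L_{A'}, R_{A'}$ are the extremal embeddings of $A'$ into $S$. This is a statement about $A$ only when $s = j-1$. For $s < j-1$ you get $R_{A'}(s+1)$, which can be strictly larger than $R(s+1)$, because $A'[s+1\ldots]$ is a proper subsequence of $A[s+1\ldots]$. The case $s > j-1$ is symmetric.

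No bookkeeping closes this, because the forward implication is false. Take $A = ab$, $S = aba$, $i = 1$, $j = 2$. Then $S' = ba$ and $A' = a \subseteq S'$, while $ab \nsubseteq ba$, so $i$ is essential for the pair $(2,A)$. Yet $L(1) = 1 = i$, and $L(2) = R(2) = 2 \neq i$. Your routing sketch fails on the same example: $A'$ embeds entirely after $i$, so its crossing point is $0 \neq j$, but $A$ does not split around $i$. The paper's proof makes the same unjustified leap (``$A' \subseteq S'$ thus $L(j-1) < i < R(j+1)$''). So this is a defect in the statement, not an idea you missed.

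What your split criterion does prove is the correct replacement: $A \nsubseteq S'$ iff $L(t) = i = R(t)$ for some $t \in [1,|A|]$, and such a $t$ is unique since $L$ is injective. To see this, let $t_0$ be maximal with $L(t_0) < i$. Then $t_0 < |A|$, since otherwise $t = |A|$ is a split. The criterion at $t_0$ forces $R(t_0+1) \le i$, and maximality gives $L(t_0+1) \ge i$. With $L \le R$ this yields $i \le L(t_0+1) \le R(t_0+1) \le i$. Your ($\Leftarrow$) argument then shows $i$ is essential for the pair $(t_0+1, A)$.

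So the lemma holds in the form ``$i$ is essential for $A$ in $S$ iff $L(t) = i = R(t)$ for some $t$.'' Equivalently, redefine ``essential for $(j,A)$'' to mean that every embedding sends $j$ to $i$. This form is what the later reduction algorithms actually use. It cannot be strengthened to an arbitrary prescribed $j$.
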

\begin{proof}
    Throughout this proof let \(S' = \delchar{S}{i}\) and
    \(A' = \delchar{A}{j}\).
    
    We start with the reverse direction. 
    Given \(\lembd{A}{S}(j) = i = \rembd{A}{S}(j)\), we have
    \(A\rointv{1}{j} \subseteq S\rointv{1}{i}\) and
    \(A\lointv{j}{|A|} \subseteq S\lointv{i}{|S|}\),
    thus \(A' \subseteq S'\).
    Now for the sake of contradiction, suppose \(A \subseteq S'\).
    It is easy to see that this implies there exists an embedding \(\Psi\) of
    \(A\) into \(S\) which sends \(j\) a value other than \(i\).
    However \(\lembd{A}{S}(i) \leq \Phi(i) \leq \rembd{A}{S}(i)\)
    thus this is not possible and \(A \not \subseteq S'\) as desired.

    We now consider the forward direction.
    As \(i\) is essential for the pair \((A,j)\) in \(S\), we have \(A' \subseteq S'\) thus \(\lembd{A}{S}(j-1) < i < \rembd{A}{S}(j+1)\).
    Note that if there exist an index \(j' \in \ointv{\lembd{A}{S}(j-1)}{ \rembd{A}{S}(j+1)}\) such that \(A[i] = S[j']\), then 
    \(A \subseteq S'\) thus this cannot happen.
    Consequently \(i\) must be the smallest index such that
    \(A\intv{1}{j} \subseteq S\intv{1}{j'}\), thus \(\lembd{A}{S}(j)=i\),
    and must \(i\) also be the largest index such \(A\intv{j}{|A|}\subseteq S\intv{i}{|S|}\), therefore \(\rembd{A}{S}(j)=i\).
\end{proof}

\section{An \(O(n)\) Time Algorithm for Computing an MCS of Two Strings}
Note that we could compute a minimal common supersequence
of $A$ and $B$ by starting out with a maximal common subsequence using Sakai's algorithm \cite{Sakai18}. But that will result in an $O(n\log n)$ time algorithm. We show a linear time algorithm, which is based on reducing a common
supersequence into a minimal one.
The supersequence $A\cdot B$ is a straightforward starting supersequence to obtain a minimal common supersequence. (An example is as follows: $A=abab, B=acbcb$ so initially
$S=A\cdot B = abab\cdot acbcb$. Clearly $S$ can be reduced to $S'=abacbcb$ which is minimal.) This idea can also be generalized to $k$ input strings, with more involved details.

The key idea underlying our algorithms is to sweep through an arbitrary common supersequence $S$ of $A$ and $B$, and, at each position, determine whether the character at that position can be deleted. An index is deemed removable if it is not essential, in the current supersequence, for preserving at least one of the given subsequences.

In order to compute the MCS efficiently, we utilize \cref{lem:ess-check}. Consequently, we construct the image of the right embedding of a sequence \(A\) and \(B\) into its supersequence \(S\). 
We use the following simple procedure. We sweep the sequence \(A\) from right to left and greedily match its indices to indices in \(S\), also from right to left.
We start with the index \(|A|\) in \(A\) then select the largest index \(k_{|A|}\) in \(S\) such that \(S[k_{|A|}] = A[|A|]\). We then match index \(|A|-1\) of \(A\) to the largest index \(k_{|A|-1}\) in \(S\) that occurs before \(k_{|A|}\) such that \(S[k_{|A|-1}] = A[|A|-1]\). 
This process is repeated iteratively until all indices of \(A\) are matched to indices in \(S\).
This right embedding is well defined whenever \(A \subseteq S\). In \cref{alg:2-reduce}, the procedure \(\textsc{BuildRightEmbedding}(S; A)\) returns the image \(r_A\) of this
embedding in ascending order. To be more precise, \( r_A\) stores the sorted indices of $S$ corresponding to the right embedding of $A$ into $S$. For the previous example that
$S=abab\cdot acbcb$ and $A=abab$, we have \(r_A=[3,4,5,9]\).

We now provide details of algorithm \textsc{ReduceSupersequence}.

For the following proofs, we fix a supersequence \(S\)
of strings \(A\) and \(B\).
Further, let \(S_i\) be the subsequence of \(S\) containing indices
where \(\text{hasX}\) is False at the start of the \(i\)-th iteration of the loop starting at 
line~\ref{line:2-main-loop}. Note after the \(i\)-th iteration \(pos = i\). 
Further, let \(d_i\) be the embedding from \(S_i\) into \(S\) where
\(d_i(j)\) gets mapped to the \(j\)-th index
in \(S\) where \(\text{hasX}[j] = \tt{False}\).

\begin{lemma}\label{lem:2-reduce-cor}
    Let \(X \in \{A, B\}\).
    At the start of every iteration of
    the loop on line~\ref{line:2-main-loop} where \(pos=i\),
    we have \(X \subseteq S_i\).
    Further \(l_X = (d_{i} \circ \lembd{X}{S_i})(j_X - 1)\)
    and \(r_X[j_X] = (d_{i} \circ \rembd{X}{S_i})(j_X)\). 
\end{lemma}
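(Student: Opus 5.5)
We argue by induction on the iteration index \(i\), proving the three claims (\(X \subseteq S_i\), the formula for \(l_X\), and the formula for \(r_X[j_X]\)) together as one loop invariant for both \(X\in\{A,B\}\). The algorithm itself is not reproduced above, so we fix the reading we rely on. Initially \(\text{hasX}\) is all False, so \(S_1=S\) and \(d_1\) is the identity. Each \(r_X\) is produced by \(\textsc{BuildRightEmbedding}(S;X)\), so \(r_X[j]=\rembd{X}{S}(j)\). The pointers start at \(j_X=1\) and \(l_X=0\), matching the convention \(\lembd{X}{S}(0)=0\). With this reading the base case is immediate.

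For the inductive step, assume the invariant at \(pos=i\) and split on what iteration \(i\) does with the current index \(i\) of \(S\).

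\emph{Case 1: index \(i\) is kept.} Then \(S_{i+1}=S_i\) and \(d_{i+1}=d_i\). If \(S[i]\) equals \(X[j_X]\) and \(i=r_X[j_X]\), the algorithm advances \(j_X\) and sets \(l_X=i\). In that situation \(\lembd{X}{S_i}(j_X)=\rembd{X}{S_i}(j_X)\) holds at the image \(i\), because \(d_i\circ\lembd{X}{S_i}(j_X)\) is the first occurrence of \(X[j_X]\) after \(l_X\) in \(S_i\). By \cref{lem:ess-check} this is exactly the essential case.

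\emph{Case 2: index \(i\) is deleted.} This happens when, for every \(X\), \(i\) is not essential for \((j_X,X)\), that is, \(d_i\circ\lembd{X}{S_i}(j_X)\neq i\) or \(d_i\circ\rembd{X}{S_i}(j_X)\neq i\). We must show that \(X\subseteq S_{i+1}=\delchar{S_i}{d_i^{-1}(i)}\) and that the left and right embeddings restricted to the other indices are unchanged. The key structural fact is that all indices of \(S\) before \(i\) have already been processed. So the left embedding of \(X[1..j_X-1]\) ends at \(l_X<i\), and \(i\) is not in the image of the left embedding for \(X[j_X..]\) unless \(i\) is its left-embedding point. If \(i\) is strictly between the left and right positions of \(j_X\), or otherwise unused, then deleting it leaves a valid embedding: use the left embedding on the prefix and the right embedding on the suffix from \(j_X\). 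This gives containment. It also shows the right-embedding images of \(X[j_X..]\) all lie in \(S_{i+1}\). Because the right embedding is greedy and its values are determined by the positions of matching characters to their right, deleting an index not in its image does not change it. Hence \(r_X[j_X]=d_{i+1}\circ\rembd{X}{S_{i+1}}(j_X)\), and similarly \(l_X\) is unchanged.

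The main obstacle is this last invariance of the right embedding under deletion. Indices \(k\ge j_X\) might have \(\rembd{X}{S_i}(k)\) equal to a position that was deleted earlier. We would rule this out by an auxiliary claim, proved inside the same induction: every deleted index lies strictly left of \(d_i\circ\rembd{X}{S_i}(j_X)\) or strictly right of \(l_X\) without being in that right image. We expect this to follow from the monotonicity \(\lembd{X}{S}\le\rembd{X}{S}\). The remaining work is bookkeeping the composition with \(d_i\) versus \(d_{i+1}\).
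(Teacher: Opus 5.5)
Your overall shape (a joint induction, with containment obtained by splicing the left embedding of a prefix onto the right embedding of a suffix) is reasonable, but the inductive step rests on a misreading of the update and has a genuine gap.

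\textbf{The update rule is misread.} When $pos=i=r_X[j_X]$, the algorithm does \emph{not} set $l_X=i$. The inner loop at line~\ref{line:a-inc-loop} scans from the old $l_X+1$ to the first undeleted index holding $X[j_X]$. By the hypothesis $l_X=(d_i\circ\lembd{X}{S_i})(j_X-1)$ and greediness, that index is $(d_i\circ\lembd{X}{S_i})(j_X)$. It is at most $i$, since index $i$ is an undeleted copy of $X[j_X]$ (the deletion test runs only after both updates). It can, however, be strictly less than $i$.

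The paper's own example shows this. Take $S=abab\cdot acbcb$ and $X=B$, at $pos=5$. Then $r_B[1]=5$, and index $5$ is kept because $l_A=5$, yet $l_B$ becomes $3$.

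Consequences:
\begin{itemize}
\item Your Case~1 conclusion that the two embeddings agree at $j_X$ is false. It is anyway the content of \cref{lem:2-reduce-li}, not of this lemma.
\item What Case~1 actually has to verify is never checked: after $j_X$ advances, $l_X$ and $r_X[j_X]$ again satisfy the two formulas, with $r_X[j_X]>i$.
\item The same subcase ($r_X[j_X]=i$, scan landing before $i$) is missing in Case~2. There $i$ is the right-embedding position of $j_X$, so ``left embedding on the prefix, right embedding on the suffix from $j_X$'' uses the very index being deleted. You must split after $j_X$: the actual deletion test ($l_A\neq pos$ and $l_B\neq pos$ after the updates) gives the new $l_X<i<r_X[j_X+1]$.
\end{itemize}
Split on whether $i=r_X[j_X]$, as the paper does, not on whether $i$ is kept.

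\textbf{The difficulty is also located in the wrong place.} The right half is immediate. At the start of the iteration with $pos=i$, every deleted index is $<i\le r_X[j_X]$, so $S$ and $S_i$ agree from index $i$ on. Hence $r_X[j]=(d_i\circ\rembd{X}{S_i})(j)$ for all $j\ge j_X$. No right-embedding position can have been deleted, and your auxiliary claim (whose first disjunct always holds) adds nothing.

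The step that genuinely needs proof is the one you pass over with ``similarly $l_X$ is unchanged''. You need that index $l_X$ is never deleted later and that every later deletion lies strictly to its right. This follows from the sweep order and the test at line~\ref{line:2-del-iness}:
\begin{itemize}
\item If the scan returns $l_X<i$, that index was already processed and survived, because the scan skips deleted indices.
\item If it returns $l_X=i$, the test spares index $i$.
\item From this point on, deletions occur only at indices $\ge i$, and at $i$ only when $l_X\neq i$.
\end{itemize}
Since the greedy left embedding of $X[1..j_X-1]$ depends only on the undeleted indices up to $l_X$, it is then frozen. Together with the scan computation above, this closes the induction, and containment follows from the corrected prefix/suffix splice.
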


\begin{proof}
        Note \(j_X = 1\) and \(l_X = 0\) after line~\ref{line:j-init} and line~\ref{line:l-init},
    respectively.
    Consequently, after entering the loop on line~\ref{line:2-main-loop} with \(pos=0\),
    we have \(l_X = (d_0 \circ \rembd{X}{S_0})(0) \)
    and \(r_X[j_X] = (d_0 \circ \rembd{X}{S_0})(0)\) as \(S_0 = S\).
    
\begin{algorithm}[H]
\caption{ReduceSupersequence}
    \label{alg:2-reduce}
\begin{algorithmic}[1] 
\Function{ReduceSupersequence}{$S, A, B$}
    \State $\text{hasX} \gets [\tt{False}]^{|S|}$. \Comment{array for tracking deletions}
  \State $r_A \gets \Call{BuildRightEmbedding}{S;A} \,\cdot\, [|S|+1]$. \label{line:buildr1}
  \State $r_B \gets \Call{BuildRightEmbedding}{S;B} \,\cdot\, [|S|+1]$. \label{line:buildl1}
  \State $j_B,\, j_A \gets 1$. \Comment{pointer to position \(r_A\) and \(r_B\) } \label{line:j-init}
  \State $l_B,\, l_A \gets 0$. \Comment{index of last index in left embedding} \label{line:l-init}.
  \State $pos \gets 0$. \Comment{position in sweep}
  \While{$pos < |S|$} \label{line:2-main-loop}
  \If{$pos = r_A[j_A]$} \label{line:check-a-stt}\Comment{Update \(j_A, l_A\) if needed}
      \State $l_A \gets l_A + 1$.
      \While{$\text{hasX}[l_A] \text{ or } A[j_A] \ne S[l_A]$}\label{line:a-inc-loop}
        \State $l_A \gets l_A + 1$.
      \EndWhile
      \State $j_A \gets j_A + 1$. \label{line:check-a-end}
    \EndIf
    \If{$pos = r_B[j_B]$} \label{line:check-b-stt}\Comment{Update \(j_B,l_B\) if needed}
      \State $l_B \gets l_B + 1$.
      \While{$\text{hasX}[l_B] \text{ or } B[j_B] \ne S[l_B]$}\label{line:b-inc-loop}
        \State $l_B \gets l_B + 1$.
      \EndWhile
      \State $j_B \gets j_B + 1$. \label{line:check-b-end}
    \EndIf
    \If{$l_A \ne pos \text{ and } l_B \ne pos$}
    \State $\text{hasX}[pos] \gets \tt{True}$. \Comment{remove index if not essential} \label{line:2-del-iness}
    \EndIf
    \State $pos \gets pos + 1$.
  \EndWhile
  \State \Return $\text{concatenate } S[i] \text{ for } i
        \text{ where not } \text{hasX}[i]$. \Comment{generating output}
\EndFunction
\end{algorithmic}
\end{algorithm}

        Suppose the claim is true for some \(i \geq 1\).
    When \(r_X[j_X] \neq i\) the claim holds trivially.
    If \(r_X[j_X] = i\), then either the condition in line~\ref{line:check-a-stt} or
    line~\ref{line:check-b-stt} is true.
    In either case, we continue incrementing
    \(l_X\) until we find the first value where \(\text{hasX}[l_X] = \tt{False}\) and 
    \(S[l_X]=X[j_X]\).
    After satisfying this condition, we increment~\(j_X\) and \(l_X = (d_i \circ \lembd{X}{S_i})(j_X)\).
    After which it follows that \(r_X[j_X] > i\) and
        \(r_X[j_X] = (d_{i} \circ \rembd{X}{S_i})(j_X)\).
    The latter claim follows when \(j_X = |X| + 1\) as any embedding between \(X\) and \(S\) maps \(|X|+1\) to \(|S|+1\).
    Recall that \(\text{hasX}[l_X] = \tt{False}\) and 
    \(S[l_X]=X[j_X]\) and note that if \(X \nsubseteq S_{i+1}\) then \(d_i\) maps
    some value \(t\) to \(i\) which is essential for \(X\) in \(S_{i}\). This cannot be the case as either \(S[l_X]=X[j_X]\) where \(l_X < pos\) or \(j_X = pos \)
    thus \(\text{hasX}[pos] = \tt{False}\). 
    
    The claim is true for \(i+1\) as every iteration increments \(pos\).
    %
\end{proof}

\begin{lemma}\label{lem:2-reduce-li}
     After the \(i\)-th iteration, if \(\text{hasX}[i]=\tt{False}\)
     then, \(t_i\) is essential for either \(A\)
     or \(B\) in \(S_i\) where 
     \(t_i = |\{j \in \intv{1}{i}: \text{hasX}[j]=\tt{False}\}|\).
     Further \(t_i\) is essential for either \(A\) or \(B\) in \(S_l\) for all
     \(l > i\).
\end{lemma}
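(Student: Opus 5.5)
The plan is to prove the two claims separately: the first from \cref{lem:ess-check} and \cref{lem:2-reduce-cor}, the second by showing that deleting later indices never loosens the left and right embeddings at a kept position.

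\emph{Claim 1.} Suppose that after the \(i\)-th iteration \(\text{hasX}[i]=\tt{False}\). Then the test before line~\ref{line:2-del-iness} failed, so \(l_A = i\) or \(l_B = i\). Say \(l_X = i\) for \(X\in\{A,B\}\). Because \(l_X\) had to move to reach \(pos=i\), the update block for \(X\) fired in this iteration, that is, \(r_X[j_X]=i\) with \(j_X\) its value before the increment. By \cref{lem:2-reduce-cor}, \(r_X[j_X] = (d_i\circ \rembd{X}{S_i})(j_X)\). The inner loop moves \(l_X\) to the next undeleted index matching \(X[j_X]\), so \(l_X = (d_i\circ\lembd{X}{S_i})(j_X)\). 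Both values equal \(i\), and \(d_i^{-1}(i) = t_i\) since \(i\) is not deleted. Hence \(\lembd{X}{S_i}(j_X)=t_i=\rembd{X}{S_i}(j_X)\), and \cref{lem:ess-check} shows that \(t_i\) is essential for \((j_X,X)\) in \(S_i\).

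The delicate point is an off-by-one. At iteration \(i\), \(S_i\) must already reflect the deletions decided at positions \(<i\), and \(t_i\) must be the rank of \(i\) among undeleted indices. I will therefore apply \cref{lem:2-reduce-cor} together with the monotonicity invariant that \(r_X\) is still correct for \(S_i\), rather than for \(S\).

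\emph{Claim 2.} For \(l>i\), \(S_l\) is obtained from \(S_i\) by deleting only positions \(>i\); this holds because \(\text{hasX}\) is only ever set at \(pos\), and \(pos\) increases. Such deletions leave the prefix of \(S_i\) up to \(t_i\) unchanged, so \(\lembd{X}{S_l}(j_X)=\lembd{X}{S_i}(j_X)=t_i\). Note that \(t_i\) keeps the same rank in \(S_l\).

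For the right embedding, \(X\subseteq S_l\) by \cref{lem:2-reduce-cor}. Deleting characters can only move the right embedding leftward in absolute position, so \(d_l(\rembd{X}{S_l}(j_X)) \le d_i(\rembd{X}{S_i}(j_X)) = i\). Any embedding is at least the left embedding, so \(\rembd{X}{S_l}(j_X)\ge \lembd{X}{S_l}(j_X)= t_i\). Together these force \(\rembd{X}{S_l}(j_X)=t_i\), and \cref{lem:ess-check} again gives that \(t_i\) is essential for \((j_X,X)\) in \(S_l\).

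The main obstacle is the monotonicity of the right embedding under deletions. I will state it as a short auxiliary claim: if \(S''\subseteq S'\) via an inclusion \(d\), and \(X\subseteq S''\), then \(d\circ\rembd{X}{S''}\) is an embedding of \(X\) into \(S'\). By maximality of the right embedding this gives \(d\circ\rembd{X}{S''}\le \rembd{X}{S'}\) pointwise. The claim is immediate from the definitions.
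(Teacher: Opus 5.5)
Your proposal is correct and takes the same route as the paper. You use \cref{lem:2-reduce-cor} to show that at a kept position $i$ the left and right embedding values of $j_X$ in $S_i$ both equal $t_i$, then apply \cref{lem:ess-check}, and finally argue that all later deletions lie strictly to the right of $i$, so the left embedding value is unchanged while the right one can only decrease. Your write-up is somewhat more careful than the paper's: you translate explicitly between coordinates of $S$, $S_i$ and $S_l$ via $d_i,d_l$, and you state the sandwich inequality (right embedding at least the left embedding), which the paper's final ``therefore'' uses only implicitly.
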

\begin{proof}
    At the beginning of the iteration, let \(j_A = q\).
    If line~\ref{line:2-del-iness} is not executed, then either \(l_A = pos\) or \(l_B = pos\).
    Without loss of generality, assume that \(l_A = pos\).
    By \cref{lem:2-reduce-cor}, we have \((d_i \circ \lembd{A}{S_i})(q-1) < pos\).
    Consequently, line~\ref{line:check-a-stt} must have been executed in order for \(l_A\) to be incremented.
    It follows that \(r_A[q] = pos = (d_i \circ \lembd{A}{S_i})(q).\)
    Moreover, we have \(\lembd{A}{S_i}(q) = \rembd{A}{S_i}(q) = pos\).
    
    At the end of the iteration, we have \(pos = i\) and \(d_i(i) = t_i\).
    Since \(d_i\) is injective, it follows that \(\lembd{A}{S_i}(q) = t_i = \rembd{A}{S_i}(q)\).
    Therefore, \(t_i\) is essential for the pair \((q, A)\) in \(S_i\).

    Note that, for \(l \geq i \), \(S_i \supseteq S_l\). Moreover, any deletion occurs after the index \(pos\) for all \(l \geq i\). Consequently, \(\rembd{A}{S_l}(q) \leq \rembd{A}{S_i}(q)\) and \(\lembd{A}{S_l}(q) = \lembd{A}{S_i}(q)\); therefore, \(\lembd{A}{S_l}(q)=\rembd{A}{S_l}(q)\).
\end{proof}

\begin{theorem}\label{thm:2-reduce}
    Given a common supersequence \(S\) of strings \(A\) and \(B\).
    A minimal common supersequence \(S' \subseteq S\) can
    be computed in \(O(n)\) time.
\end{theorem}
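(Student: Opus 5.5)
The plan is to show that \textsc{ReduceSupersequence}$(S,A,B)$ (\cref{alg:2-reduce}) outputs a minimal common supersequence of $A$ and $B$, and that it runs in time linear in $|S|+|A|+|B|$. Here $n$ bounds $|S|$; in the intended use $S=A\cdot B$, so $|S|\le 2n$.

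\emph{Correctness.} The argument has three steps.
\begin{enumerate}
\item Let $S_{|S|}$ denote the string formed by the indices with $\text{hasX}$ false when the loop ends. Apply \cref{lem:2-reduce-cor} inductively through the final iteration. This gives $A\subseteq S_{|S|}$ and $B\subseteq S_{|S|}$, so the output $S'$ is a common supersequence and a subsequence of $S$.
\item Take any index $t$ of $S'$. It equals $t_i$ for the iteration $i$ in which the corresponding position of $S$ was kept. The second part of \cref{lem:2-reduce-li} says $t_i$ remains essential for $A$ or $B$ in every later $S_l$, and in particular in $S'=S_{|S|}$.
\item Since every index of $S'$ is essential for some input string, \cref{lem:ess-iff} gives that $S'$ is minimal.
\end{enumerate}
The only care needed is the index bookkeeping between $pos$, $d_i$ and $t_i$. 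After the final iteration, $d_{|S|}$ is exactly the order-preserving map of the surviving indices, so the conclusion transfers directly.

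\emph{Running time.} \textsc{BuildRightEmbedding}$(S;X)$ is a single right-to-left greedy sweep. Each pointer into $S$ and into $X$ only decreases, so it costs $O(|S|+|X|)$. The main loop runs $|S|$ times and does $O(1)$ work per iteration, apart from the inner while loops at lines~\ref{line:a-inc-loop} and~\ref{line:b-inc-loop}.

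The key point is that $l_A$ and $l_B$ are monotonically nondecreasing and never reset. \Cref{lem:2-reduce-cor} identifies $l_X$ with a left-embedding position bounded by $r_X[j_X]\le |S|+1$, so each is incremented at most $|S|+1$ times in total. The amortized cost of all inner loops is therefore $O(|S|)$. Producing the output is another $O(|S|)$ pass, so the total is $O(|S|+|A|+|B|)=O(n)$.

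\emph{Main obstacle.} The step I expect to need the most justification is that the inner while loops never scan past $r_X[j_X]$. In particular, they must not run off the array, and they must terminate before reaching $pos$ inappropriately. The plan is to use the invariant of \cref{lem:2-reduce-cor}: since $X\subseteq S_i$, the left embedding position of $X[j_X]$ in $S_i$ exists and is at most the right embedding position $r_X[j_X]=pos$. The loop therefore stops at some $l_X\le pos$, which gives both termination and the linear amortized bound.
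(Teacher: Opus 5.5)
Your proposal is correct. The correctness half is the paper's argument: \cref{lem:2-reduce-li} makes every kept index essential for $A$ or $B$ in the final string, and \cref{lem:ess-iff} then gives minimality. You also cite \cref{lem:2-reduce-cor} explicitly for $A,B\subseteq S'$, a step the paper leaves implicit.

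Where you differ is the running-time accounting, and your version is the one that fits the pseudocode. The paper charges the inner while loops at lines~\ref{line:a-inc-loop} and~\ref{line:b-inc-loop} to increments of $j_A,j_B$ and bounds them by $O(|A|+|B|)$. Those loops actually increment $l_A,l_B$, and they can walk past many deleted or non-matching positions. For example, take $A=b$, $B=c$, $S=a^k bc$. All $k$ copies of $a$ are deleted, and each inner loop then scans past them, so the inner work is $\Theta(|S|)$ while $|A|+|B|=2$. Your observation that $l_X$ is nondecreasing, never reset, and bounded by $|S|+1$ gives the correct $O(|S|)$ total. That is what the theorem needs with $n\ge |S|$, and it agrees with the paper's bound when $S=A\cdot B$.

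One point that both you and the paper gloss over: as printed, the main loop visits $pos=0,\dots,|S|-1$, so position $|S|$ is never examined. For $A=a$, $B=b$, $S=abc$, the literal pseudocode returns $abc$ unchanged. Your step 2 assumes every surviving position was kept by some iteration, which only holds if the sweep is read as $pos=1,\dots,|S|$. That is clearly the intended reading, as the paper's worked example shows, but you should state it.
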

\begin{proof}
    Note by \cref{lem:2-reduce-li}, every index where
    \(\text{hasX} = \tt{False}\) is essential for either \(A\) or \(B\).
    Thus by \cref{lem:ess-iff}, the algorithm returns an MCS.

    Initialization of the variables requires no more than \(O(n)\) time.
    Every iteration of the while loops starting at line~\ref{line:a-inc-loop}
    or line~\ref{line:b-inc-loop}
    increments \(j_A \in \intv{1}{|A|}\) or \(j_B \in \intv{1}{|B|}\) by one per iteration, thus it takes no more than \(O(|A| + |B|)\) time all together.
    The remaining code in the loop starting at line~\ref{line:2-main-loop} all require no more
    than \(O(1)\) time per iteration thus contribute an additional \(O(n)\) time.
    Therefore the total running time of~\cref{alg:2-reduce} is \(O(n)\).
\end{proof}

As the concatenation of \(A\) and \(B\) is always a common supersequence of both strings,
we have the following result as a consequence of \cref{thm:2-reduce}.


\begin{theorem}
    Given strings \(A\) and \(B\), a minimal common supersequence of \(A\) and
    \(B\) can be computed in \(O(n)\) time.
\end{theorem}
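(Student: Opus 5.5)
The plan is to apply \cref{thm:2-reduce} to the trivial common supersequence \(S = A \circ B\). First we check that \(A \circ B\) is a common supersequence of \(A\) and \(B\). Embed \(A\) into the indices \(1,\ldots,|A|\) of \(S\) by the identity. Embed \(B\) into the indices \(|A|+1,\ldots,|A|+|B|\) by the shift \(j \mapsto |A|+j\). Both maps are strictly increasing and preserve characters, so they are embeddings in the sense of the definition, and hence \(A \subseteq S\) and \(B \subseteq S\).

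Next, constructing \(S\) by concatenation takes \(O(|A|+|B|) = O(n)\) time, and \(|S| \le 2n = O(n)\). Running \textsc{ReduceSupersequence}\((S,A,B)\) then returns a minimal common supersequence \(S' \subseteq S\). By \cref{thm:2-reduce}, this runs in time linear in the input size.

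The only point that needs care is that \cref{thm:2-reduce} is stated with \(n\) bounding the lengths of all strings involved, including \(S\). Its running-time argument charges \(O(1)\) per position of \(S\), through the main loop and the \textsc{BuildRightEmbedding} sweeps. It also charges amortized work to the pointers \(l_A, l_B\), which only move forward over \(S\). The total is therefore \(O(|S|+|A|+|B|)\), and for \(S = A\circ B\) this is \(O(n)\).

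Finally, minimality of \(S'\) is exactly the guarantee of \cref{thm:2-reduce}, which rests on \cref{lem:2-reduce-li} together with \cref{lem:ess-iff}. Adding the \(O(n)\) concatenation cost gives total time \(O(n)\). I do not expect a real obstacle here; the statement is a direct corollary, and the only subtlety is confirming that \(|S| = O(n)\) so the linear bound is preserved.
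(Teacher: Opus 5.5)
Your proposal is correct and matches the paper's proof: the paper also obtains this theorem as an immediate corollary of Theorem~\ref{thm:2-reduce}, applied to the common supersequence \(A\circ B\), whose length is at most \(2n\). Your running-time accounting is, if anything, more accurate than the paper's. You pay for the inner loops through the monotone pointers \(l_A, l_B\) sweeping over \(S\), giving \(O(|S|+|A|+|B|)\), whereas the paper attributes those loop iterations to increments of \(j_A, j_B\).
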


The running of \cref{alg:2-reduce} on the previous example
is as follows. We have \(S=abab\cdot acbcb\), \(A=abab\) and \(B=acbcb\). Initially $r_A=[3,4,5,9]$, $r_B=[5,6,7,8,9]$, and $l_B=l_A=0$. When $pos$ is incremented to 1, the condition at line 19 is satisfied, hence we have $\text{hasX}[1]={\tt True}$. Similarly we have $\text{hasX}[2]={\tt True}$. This means that the first two letters in $S$ will be deleted to finish the reduction.

The details of the algorithm for computing an MCS of \(k\) strings are presented in \cref{sec:kmcs}.

\section{Enumeration of Minimal Common Supersequences}
Our approach to minimal common supersequence enumeration for two strings
\(A\) and \(B\) is to partition them carefully into an equal number of blocks (some could be empty), and then generate the supersequences by using these blocks. These blocks are produced by matching ranges of indices 
in one string to ranges of indices in the other such that one of the
ranges describes a subsequence of the other. An example is given as follows.

\begin{equation*}
A=accdabcdcdab\quad\quad
B=bcbabcdcdcdd
\end{equation*}
\begin{equation*}
A=\fbox{accd}\cdot \fbox{~~~~~~ab}\cdot \fbox{cdcdab}\cdot\fbox{***}
\end{equation*}
\begin{equation*}
B=\fbox{****}\cdot \fbox{bcbab}\cdot \fbox{~~~cdcd}\cdot\fbox{cdd}
\end{equation*}
\begin{equation*}
S=\fbox{accd}\cdot \fbox{bcbab}\cdot \fbox{cdcdab}\cdot\fbox{cdd}
\end{equation*}

In this example, we have $A=A_1\cdot A_2\cdot A_3\cdot A_4$ and $B=B_1\cdot B_2\cdot B_3\cdot B_4$ ($B_1$ and $A_4$ are empty). The minimal common supersequence $S$ is obtained by an alignment of four decomposed blocks of $A$ and $B$, then $S$ is obtained by reading the dominating blocks.


Given two strings \(A,B \in \Sigma^{*}\)
we say an interval \(I \in \interval{A}\) 
\emph{fills} \(J \in \interval{B}\) 
in \(B\) using \(A\) if
\(A[I] \subseteq B[J]\) and no interval containing \(I\) corresponds to a larger 
substring of \(A\) that is a subsequence of \(B[J]\).
(Following the above example, $I=[5,6]$ fills $J=[1,5]$ in $B$ using $A$.) Similarly, we say an interval \(J \in \interval{B}\)
is \emph{full}
for \(I \in \interval{A}\) if \(I\) fills \(J\) in \(B\) using \(A\).
We also say \(J \in \interval{B}\) matches \(I \in \interval{A}\) if  \(B[J] = A[I]\) and we call \(J\) a match in \(B\) for \(I\) in \(A\).

We start by stating some simple observations:
\begin{observation}
 If \(J\) is a match in \(B\) for \(I\) in \(A\),
    then \(J\) in \(B\) is full for \(I\) in \(A\). \label{obs:1}
\end{observation}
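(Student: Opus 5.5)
We need to show: if \(B[J]=A[I]\), then \(I\) fills \(J\) in \(B\) using \(A\), i.e.\ \(A[I]\subseteq B[J]\) and no interval \(I'\supsetneq I\) (in the partial order of intervals) has \(A[I']\) strictly longer than \(A[I]\) while \(A[I']\subseteq B[J]\).

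The plan is to check the two defining conditions of ``fills'' directly.

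\textbf{First condition.} Since \(B[J]=A[I]\), equality trivially gives \(A[I]\subseteq B[J]\), using the identity embedding.

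\textbf{Second condition.} Let \(I'\in\interval{A}\) be any interval containing \(I\), and suppose \(A[I']\) is a strictly larger substring of \(A\) than \(A[I]\). Then \(I'\) contains at least one more index of \(A\) than \(I\) does, so \(|A[I']|>|A[I]|=|B[J]|\). A subsequence can never be longer than its supersequence, since an embedding is injective into \([|B[J]|]\). Hence \(A[I']\nsubseteq B[J]\), and no enlargement of \(I\) corresponds to a larger substring that is still a subsequence of \(B[J]\). Together with the first condition, this shows \(I\) fills \(J\), so \(J\) is full for \(I\).

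\textbf{Subtlety.} The only step needing care is the endpoint convention. An interval \(I'\supsetneq I\) might differ only in non-integer points, such as an open versus closed endpoint at a non-index value, and so contain exactly the same indices. Such an \(I'\) has \(A[I']=A[I]\), which is not a larger substring, so the definition is not violated. I would state explicitly that ``larger substring'' means strictly more indices, which reduces the argument to the length count above. There is no real obstacle beyond this bookkeeping.
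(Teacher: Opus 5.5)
Your proof is correct. Equality gives \(A[I]\subseteq B[J]\). Any interval containing \(I\) with strictly more indices of \(A\) yields a substring longer than \(|B[J]|\), so it cannot be a subsequence of \(B[J]\). The paper states this as a simple observation without proof, and your length-counting check is exactly the immediate verification it relies on.
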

\begin{observation}
     If \(I\) fills \(J\) in \(B\) using \(A\) and
    \(I'\) fills \(J'\) in \(B\) using \(A\) then, when 
    \(I \cup I' \in \interval{A}\) and \(J \cup J' \in \interval{J}\),
    we have \(I \cup I'\) fills \(J \cup J'\). \label{obs:2}
\end{observation}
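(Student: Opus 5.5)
The plan is to read the statement with the intended configuration. Here \(\interval{J}\) should be \(\interval{B}\). I take \(I\) to lie to the left of \(I'\) in \(A\) and \(J\) to the left of \(J'\) in \(B\), with the unions contiguous intervals. The case \(I=I'\), \(J=J'\) is trivial. I would first fix orientation by symmetry, then verify the two parts of the definition of \emph{fills}: the containment \(A[I\cup I']\subseteq B[J\cup J']\), and maximality.

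For containment, I would concatenate the embeddings. There is an embedding \(\Phi\) of \(A[I]\) into \(B[J]\) and an embedding \(\Phi'\) of \(A[I']\) into \(B[J']\). Since every index of \(I\) precedes every index of \(I'\), and likewise for \(J\) and \(J'\), the union map is order preserving. This gives \(A[I\cup I']\subseteq B[J\cup J']\). If the intervals overlap, I would first trim the overlapping part of one of them, which does not change the union.

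For maximality, suppose for contradiction that some interval \(K\supsetneq I\cup I'\) has \(A[K]\subseteq B[J\cup J']\). Then \(K\) extends \(I\cup I'\) strictly to the left or strictly to the right; by symmetry, say to the left. Write \(K=L\cup I\cup I'\) with \(L\) nonempty. I would take the right embedding \(\rembd{A[K]}{B[J\cup J']}\). Since \(A[I']\subseteq B[J']\), the greedy right-to-left matching places all of \(I'\) inside \(J'\), and as far right as possible. I then want to show that \(L\cup I\) is embedded entirely inside \(J\). That would give \(A[L\cup I]\subseteq B[J]\), contradicting that \(I\) fills \(J\). For this I would use that \(I'\) fills \(J'\): otherwise the right embedding of \(L\cup I\) uses some positions of \(J'\) left of the image of \(I'\). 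In that case I would try to show that the last index of \(I\) together with \(I'\) fits inside \(J'\), contradicting the maximality of \(I'\) for \(J'\).

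The last step is where I expect the main obstacle. The leftover positions of \(J'\) before the image of \(I'\) may absorb a suffix of \(A[L\cup I]\) without that suffix being adjacent to \(I'\) in a way that extends \(I'\). If so, the maximality of \(I'\) gives no contradiction. So I would first check small cases to see whether the observation needs an extra hypothesis. One candidate hypothesis is that the right embedding of \(A[I']\) into \(B[J']\) starts at the left endpoint of \(J'\), i.e.\ \(J'\) is tight, as in the block decompositions used later. If needed, I would add such a tightness condition and argue that with it the leftover region is empty, which closes the argument.
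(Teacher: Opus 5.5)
\textbf{The overlap step is false.} The paper gives no proof of this observation. It is only invoked for consecutive blocks of a partition, in the backward direction of Theorem~\ref{thm:bicondtional}. Your consistent orientation ($I$ before $I'$ and $J$ before $J'$) is the intended reading, and it is a genuine hypothesis rather than a symmetry. Your overlap step, however, cannot work, because the observation is false once intervals overlap.
\begin{itemize}
\item Overlap in $B$: take $A=ab$, $B=ba$, $I=[1,2)$, $I'=[2,3)$, $J=J'=[1,2]$. Both $I$ and $I'$ fill $J$, yet $A[I\cup I']=ab\not\subseteq ba=B[J\cup J']$. Trimming $J'$ to $J'\setminus J$ destroys $A[I']\subseteq B[J']$.
\item Overlap in $A$: take $A=xyzw$, $B=yzxyzw$, $I=[2,3]$, $I'=[3,4]$, $J=[1,5)$, $J'=[5,7)$. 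Both fillings hold, but $A[[1,4]]=xyzw\subseteq B$, so $I\cup I'$ is not maximal. Your ``last index of $I$'' now lies in $I'$ and gives no contradiction.
\end{itemize}
So the correct hypothesis is that $I,I'$ are disjoint and consecutive, and likewise $J,J'$.

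\textbf{Under that hypothesis your argument already closes.} The obstacle you anticipate does not exist. Take any embedding of $A[K]$ into $B[J\cup J']$; the right embedding is unnecessary. The indices of $K$ sent into $J$ form a prefix $P$ of $K$, and the rest form a suffix $Q$. For a left extension, assume without loss of generality that $K=L\cup I\cup I'$.
\begin{itemize}
\item If $P\supseteq L\cup I$, then $A[L\cup I]\subseteq B[J]$, contradicting that $I$ fills $J$.
\item Otherwise $\max(L\cup I)\in Q$. Since $K$ is contiguous and $I$ ends exactly where $I'$ begins, $Q$ is an interval strictly containing $I'$ with $A[Q]\subseteq B[J']$. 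This contradicts that $I'$ fills $J'$.
\end{itemize}
A suffix of $A[L\cup I]$ absorbed by $J'$ is automatically adjacent to $I'$, so it always extends $I'$. No tightness assumption is needed. Right extensions are the mirror image, with the roles of $P$ and $Q$ exchanged.

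As written, your proposal keeps a false step, stops short of this conclusion, and suggests an unnecessary extra hypothesis. It is not yet a proof.
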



\begin{theorem}\label{thm:bicondtional}
    \(S\) is a minimal common supersequence of strings \(A\) and \(B\) if
    and only if the intervals of \(A\), \(B\), and \(S\) can be partitioned into sequences of
    intervals \((I_i)_{i \in [l]}\), \((J_i)_{i \in [l]}\), and \((K_i)_{i \in [l]}\)
    such that:
    \begin{enumerate}
    \item for every \(i \in [l]\), we have that \(K_i\) is full for
        \(I_i\) in \(A\) and \(J_{i}\) in \(B\).
    \item every \(K_i\) in sequence \((K_i)_{i \in [l]}\) matches
    either \(I_{i}\) in \(A\) or \(J_{i}\) in \(B\).
        \end{enumerate}
\end{theorem}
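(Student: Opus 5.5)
We prove the two directions separately, using \cref{lem:ess-iff} and \cref{lem:ess-check} to translate minimality into a statement about left and right embeddings of \(A\) and \(B\) into \(S\).

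\textbf{Forward direction.} Let \(S\) be minimal. Fix an embedding of \(A\) into \(S\) and one of \(B\) into \(S\); for concreteness take the left embeddings \(\lembd{A}{S}\) and \(\lembd{B}{S}\). By \cref{lem:ess-iff} every index of \(S\) is essential for \(A\) or for \(B\). By \cref{lem:ess-check} it is therefore the common image of \(\lembd{X}{S}(j)=\rembd{X}{S}(j)\) for some \(X\in\{A,B\}\) and some \(j\). In particular, every index of \(S\) is covered by the image of \(A\) or of \(B\). Label each index of \(S\) with \(A\) if it is essential for \(A\), and with \(B\) otherwise; then it is essential for \(B\).

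Cut \(S\) into maximal runs of equal label. Each run gives an interval \(K_i\). Let \(I_i\) (resp.\ \(J_i\)) be the indices of \(A\) (resp.\ \(B\)) whose embedded images lie in \(K_i\). This gives consecutive partitions of \(A\), \(B\) and \(S\), since embeddings are monotone; empty blocks are allowed. On an \(A\)-run, every position of \(K_i\) is the forced image of a character of \(A\), so \(A[I_i]=S[K_i]\) and \(K_i\) matches \(I_i\). Fullness for \(I_i\) then follows from \cref{obs:1}.

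The remaining task, condition 1, is to show that \(K_i\) is full for \(J_i\) in \(B\): no larger interval of \(B\) around \(J_i\) embeds into \(S[K_i]\). Suppose the next character of \(B\), say \(B[\max J_i+1]\), also embedded inside \(K_i\). Then we could shift that character's image leftward into \(K_i\). This would show that the indices of the next \(B\)-run are not forced, so \(\lembd{B}{S}\neq\rembd{B}{S}\) there. That contradicts essentiality via \cref{lem:ess-check}, unless those indices are essential for \(A\), which contradicts the labeling. The symmetric argument handles extension on the left, using \(\rembd{B}{S}\). I expect this step to be the main obstacle. The block boundaries must be chosen carefully, for example by the labeling convention above and by assigning each \(B\)-character to the leftmost run where it fits, so that the maximality in the definition of "fills" actually holds. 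I would first try taking \(J_i\) from the left embedding and verifying maximality on the right, then check the left endpoint with \(\rembd{B}{S}\).

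\textbf{Reverse direction.} Given the partitions, each \(S[K_i]\) contains \(A[I_i]\) and \(B[J_i]\), so \(S\) is a common supersequence. For minimality we use \cref{lem:ess-iff}: take \(p\in K_i\) with, say, \(K_i\) matching \(I_i\). We claim \(p\) is essential for \(A\). If it were not, \(A\) would embed in \(S\) with \(p\) unused. Then some block of \(A\) would have to straddle into a neighbouring \(K_{i'}\), meaning a strictly larger interval of \(A\) is a subsequence of \(S[K_{i'}]\cup\cdots\). This contradicts the fullness of \(K_{i'}\) for \(I_{i'}\), combined with \cref{obs:2} to merge consecutive full blocks. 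Formally, I would argue by induction on \(i\) that \(\lembd{A}{S}\) maps \(I_i\) exactly onto \(K_i\)'s matching positions and never earlier; symmetrically \(\rembd{A}{S}\) does so never later. Hence the two embeddings agree on matched blocks, and \cref{lem:ess-check} gives essentiality.
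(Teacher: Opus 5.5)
Your proposal is correct and follows essentially the same route as the paper. In the forward direction you split \(S\) into maximal runs of indices essential for one string and read off the other string's blocks from the left embedding; in the reverse direction you use fullness (merged via Observation~\ref{obs:2}, or via your induction on the left and right embeddings) to show that no block of a matched string can shift when an index is deleted. The fullness step you flagged as the main obstacle does go through. Extension on the right is excluded by the leftmost property of the left embedding. Extension on the left is excluded because the last index of the preceding run is forced, and the left embedding places the rest of the string after \(K_i\); so a left extension would move that index's right-embedding position into or past \(K_i\), contradicting Lemma~\ref{lem:ess-check}.
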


\begin{proof}
Suppose that \(S\) is a minimal common supersequence of strings \(A\) and \(B\). We aim to show that there exist partitions of \(A\), \(B\), and \(S\) into sequences of intervals \((I_i)_{i \in [l]}, (J_i)_{i \in [l]}\), and \((K_i)_{i \in [l]}\) such that conditions (1) and (2) are satisfied.
We construct the intervals iteratively.
Suppose we already found the first \(j\) intervals where
\(j\) is possibly zero.
Consider the first index \(s\) in \(S\) which does not appear in
any interval we already have.
So as \(S \in \mcsup(A,B)\), we know \(s\) is essential
for \(A\) or \(B\) and assume without loss of generality
it is essential for \(A\).
Take \(K_{j+1}\) to be a maximal interval among intervals containing \(s\) and only the indices essential for \(A\) and which is disjoint with the previous interval \(K_j\). 
Now take \(I_{j+1}\) to be a maximal interval containing every index \(p\) in \([0,|A|+1]\) where some index \(s \in K_{j+1}\) is
essential for \((A,p)\)
and is disjoint with the previous interval \(I_j\). 
Define \(J_{j+1}\) to be the maximal interval containing all
values of \(\intv{1}{|B|}\) that are mapped to \(K_{j+1}\)
in the left embedding.
If no indices are mapped to \(K_{j+1}\), take \(J_{j+1}\)
to be the next available interval of the form \((m,m+1)\) for \(m \in \mathbb{Z}\).
In order for this to be a partition, add \(|A|+1\), \(|B+1|\), \(|S|+1\) to their last interval in each string's respective partition.
Note by construction, every \(K_i\) matches some interval
in \(A\) or \(B\) thus condition (2) holds.
Note by the maximality of the chosen intervals in the construction, \(K_i\) is full for each \(I_i\) and \(j_i\) in \(A\) and \(B\).

    Second we show the backward direction. Suppose we have intervals for \(A, B\) and \(S\) such that conditions (1) and (2) hold. 
    
    Condition (1) implies the following statements:
    \[\prod_{i=1}^{l} S[K_i] \supseteq \prod_{i=1}^{l} A[I_i],\text{ and }\prod_{i=1}^{l} S[K_i] \supseteq \prod_{i=1}^{l} B[J_i].\]
    Note by Condition (2) and Observation 5.2,
    any union of consecutive intervals \(\idseq{K}{i}{l}\) is full for the corresponding union in 
     \(\idseq{I}{i}{l}\) and  \(\idseq{J}{i}{l}\).
     Let \(K_i\) be some interval in \(\idseq{K}{i}{l}\),
     and note that the union of all intervals before
     (and respectively after) are full for 
    their corresponding unions in 
     \(\idseq{I}{i}{l}\) and  \(\idseq{J}{i}{l}\).
     Suppose \emph{w.l.o.g.} that \(K_i\) matches the interval
     \(J_i\) and note deleting any index in
     \(K_i\) cannot produce a supersequence for \(B\)
    as some character in \(B[J_{i}]\) must be added
    to the substrings corresponding to the union of intervals after or before \(J_i\).


    \label{proof:bicondtional}
\end{proof}

Given the partitions for \(A\) and \(B\) it is easy to reconstruct \(S\). But such partitions are not unique thus we are not able to enumerate minimal common supersequences yet.
For uniqueness, we require an additional property to produce a unique partition
of each \(\itv{A}\) and \(\itv{B}\) for a given supersequence.
Luckily, we can do this by adding two conditions: (1) that
intervals in \(\idseq{K}{i}{l}\) are left-closed and are still full 
when their left-endpoints are removed for any interval 
with the same index in \(\idseq{I}{i}{l}\) and \(\idseq{J}{i}{l}\);
(2) intervals in \(\idseq{K}{i}{l}\) do not match at least one string
matched by the prior interval for \(i > 1\).

\begin{theorem}\label{thm:unique-bicondition}
    Let \(S\) be a minimal common supersequence of two distinct strings \(A\) and \(B\).
    There exists exactly one partition of \(\itv{A}\), \(\itv{B}\), and
     \(\itv{S}\) into left-closed intervals such that 
     conditions of the prior theorem hold with the added restrictions:
    \begin{enumerate}
    \item Intervals in \(\idseq{K}{i}{l}\) alternate between the matching intervals sharing their indices in
    \(\idseq{I}{i}{l}\) and \(\idseq{J}{i}{l}\).
    
     \item If \(K_i\) does not match \(I_i\) (or \(J_i\)), 
    we have that the left-open interval \(K_i \setminus \{\lendpt{K_i}\}\) is full for \(I_i\) (respectively, \(J_i\)), where \(\min K_i\) is the minimum of the set \(K_i\).

    \end{enumerate}
\end{theorem}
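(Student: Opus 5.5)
Existence will come from a canonical greedy construction, and uniqueness from showing that every step of that construction is forced by the two added restrictions. Start from the partition given by \cref{thm:bicondtional} (or directly from essentiality via \cref{lem:ess-iff} and \cref{lem:ess-check}). Every index of \(S\) is essential for \(A\) or for \(B\). Build the \(K_i\) left to right as maximal runs of \(\itv{S}\) whose indices are essential for one fixed string, alternating between \(A\) and \(B\); this yields restriction (1).

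The dividing points come from the left embeddings. If \(K_i\) is an \(A\)-run, take \(I_i\) to be the block of \(A\)-indices whose left-embedding image lies in \(K_i\); by \cref{lem:ess-check} these are exactly the indices of \(A\) matched to the essential positions, so \(K_i\) matches \(I_i\). Take \(J_i\) to be the block of \(B\)-indices whose left-embedding image \(\lembd{B}{S}\) lies in \(K_i\), made left-closed (possibly an empty interval \((m,m+1)\) as in the earlier construction). Fullness of \(K_i\) for \(J_i\) follows from the greedy property of the left embedding: the next \(B\) character cannot be embedded before the end of \(K_i\). Restriction (2), that \(K_i\setminus\{\min K_i\}\) is still full for \(J_i\), should also follow from leftmost greediness. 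If the leftmost embedding of \(B[J_i]\) were forced to use \(\min K_i\), that position would be essential for \(B\). Then the run boundary would be misplaced, or the position is shared, in which case it is essential for both strings and the alternation convention assigns it to the earlier run. This needs a short case argument. The distinctness of \(A\) and \(B\) is needed so that no single run matches both strings simultaneously throughout, which would break the alternation.

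For uniqueness, suppose two partitions both satisfy the conditions. Compare them at the first index where they differ, and argue by induction on \(i\) that \(K_i, I_i, J_i\) coincide. Given equal prefixes, \(K_i\) is left-closed and starts at the first uncovered index of \(S\). By alternation, \(K_i\) must match the string it did not match at step \(i-1\). Its right endpoint is pinned on two sides: it cannot be extended, because that would contradict fullness or matching, and it cannot be shortened, because restriction (2) for \(K_{i+1}\) would then fail, since the next block would still be full after dropping its first point, which forces leftmost placement. Once \(K_i\) is fixed, \(I_i\) and \(J_i\) are determined: the matched one is forced because a match with a fixed left endpoint is unique, and the other is forced by fullness, which in terms of subsequence containment in \(S[K_i]\) is maximal and unique given its left endpoint.

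The main obstacle will be the interaction of restriction (2) with the choice of breakpoints. One must show that there is exactly one right endpoint for each \(K_i\) that is consistent with restriction (2) for \(K_{i+1}\), including the degenerate cases of empty intervals and the final intervals containing \(|A|+1\), \(|B|+1\), \(|S|+1\). My first approach is to characterize the breakpoints as exactly the positions where the left embeddings \(\lembd{A}{S}\) and \(\lembd{B}{S}\) switch which string owns an essential position. I would then show that both partitions must place breakpoints at these positions.
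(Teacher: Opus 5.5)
Your outline is sound, and for existence it takes a different route from the paper. The paper never writes the partition down. It starts from an arbitrary partition supplied by \cref{thm:bicondtional} and merges neighbouring blocks that match the same string to get alternation. It then repairs the first violation of restriction (2) by moving the shared first character of \(K_j\), together with the first characters of \(I_j\) and \(J_j\), into the previous block, and iterates. Its uniqueness argument is a one-line first-difference argument: the first index of \(S\) placed differently by two partitions would be essential for both strings, which restriction (2) forbids. You instead fix one canonical partition and show it already satisfies restriction (2), so no repair is needed. That partition is essentially the construction in the proof of \cref{thm:bicondtional}, but with runs forced to alternate and each run taken maximal. Your uniqueness step is a block-by-block induction pinning each breakpoint to the end of a maximal run, which is the same idea as the paper's. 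Your route gives an explicit description of the unique partition: breakpoints are where maximal runs of essential positions change owner, and the unmatched side is read off from \(\lembd{A}{S}\) and \(\lembd{B}{S}\). It also avoids any iteration. The paper's route shows more generally that every valid partition can be normalized by local exchanges, but it must track that each shift preserves the conditions of \cref{thm:bicondtional}.

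When you write it out, supply the following.

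\begin{enumerate}
\item Fullness is two-sided (cf.\ \cref{lem:fil-cond}). Greediness of the left embedding gives only right-maximality of \(J_i\) in an \(A\)-run \(K_i\). Left-maximality holds because the last position of the preceding \(B\)-run is essential for \(B\). If that character together with \(B[J_i]\) fit in \(S[K_i]\), splicing with the left embedding on both sides would give an embedding of \(B\) sending that index past its \(\rembd{B}{S}\) position.
\item For restriction (2), ``forced to use \(\min K_i\)'' implies \(B\)-essentiality only once the right end is pinned. The first position of \(K_{i+1}\) (when it exists) is essential for \((\max J_i+1,B)\), so \(\rembd{B}{S}\) maps \(J_i\) into \(K_i\). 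It avoids \(\min K_i\), because that position is not \(B\)-essential by maximality of the preceding \(B\)-run.
\item If \(A\) and \(B\) begin with the same letter, start with the string whose essential run from position \(1\) is longer; the two runs cannot be equal since \(A\neq B\). Otherwise \(K_1\) matches both strings and alternation breaks.
\item In the uniqueness step, the fact doing the work is that in any valid partition a matched block consists of positions essential for its string; this follows from fullness of the prefix and suffix unions. This rules out extending past a maximal run. Shortening makes \(\min K_{i+1}\) essential for the string of \(K_i\), so \(K_{i+1}\setminus\{\min K_{i+1}\}\) is \emph{not} full for that side. Your phrase ``would still be full after dropping its first point'' states this the wrong way round.
\end{enumerate}
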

\begin{proof}
    Note that the condition requiring \(K_{i}\) does not match the same string
    as \(K_{i-1}\), for \(i > 1\), is trivial since when both match
    the identical strings we can combine the \(i\) and \(i-1\) intervals into  a larger interval for each partition.

    We now focus on the second condition. Let \(\idseq{K}{i}{l}\),
    \(\idseq{I}{i}{l}\), and \(\idseq{J}{i}{l}\) be any partitions
    satisfying~\cref{thm:bicondtional} and condition (1).
    Note that \(K_1\setminus \{0\}\) contains the
    same indices of \(S\) as \(K_1\) thus condition (2) trivially holds.
    Furthermore, when the longest common prefix of \(A[I_j]\) and \(S[K_j]\)
    is \(\epsilon\) then condition (2) necessarily holds for any \(j \in [l]\).
    Thus take \(j \in \intv{2}{l}\) to be the first index that violates condition (2) and assume \emph{w.l.o.g.} that \(K_j\) in \(S\) does not match \(I_j\) in \(A\).
    Let \(p_K,p_I,p_J\) be the first index in \(K_j\), \(I_j\), and \(J_j\)
    respectively and let \(I_{j-1}'= I_{j-1} \cup [p_I,p_I+1)\), 
    \(I_{j}' = I_{j}\setminus [p_I, p_I+1)\)  and
    define \(J'_{j-1}\), \(J'_j\), \(K_{j-1}'\), \(K_{j}'\) likewise.
    As condition~(2) fails, we have that \(I_j\) does not fill \(K_j \setminus \{p_K\}\). By the definition of \emph{fill}, this implies that either \(A[I_j] \not\subseteq S[K_j \setminus \{p_K\}]\), or there exists an interval containing \(I_j\) that corresponds to a larger substring of \(A\) which is a subsequence of \(S[K_j]\). However, by assumption, the partition satisfies condition~(2) of \cref{thm:bicondtional}. Therefore, it must be the case that \(A[I_j] \not\subseteq S[K_j \setminus \{p_K\}]\). Consequently, \(A[I_j']\) fills \(S[K_j']\) in \(S\) using \(A\).
    Furthermore, intervals \([p_I,p_I+1)\) and \([p_J,p_J+1)\) must match \([p_K,p_K+1)\) and consequently
    must fill \([p_K,p_K+1)\). As the matching condition holds trivially,
     we may replace \(I_{j-1}, I_{j}, J_{j}, J_{j-1}, K_{j-1}\), and \(K_{j}\)
    with \(I_{j-1}', I_{j}', J_{j}', J_{j-1}', K_{j-1}'\), and \(K_{j}'\)
    in \(\idseq{I}{i}{l}\), \(\idseq{J}{i}{l}\), and  \(\idseq{K}{i}{l}\), respectively to produce new partitions where the common prefix of
    \(j\)th interval of \(\idseq{I}{i}{l}\) and \(\idseq{K}{i}{l}\) is one letter shorter. As we can iterate on this process until
    every interval in  \(\idseq{K}{i}{l}\) not matching an interval with the same index 
    in \(\idseq{I}{i}{l}\) or \(\idseq{J}{i}{l}\) 
    and violating condition (1) have
    differing first characters, we can transform
    the partitions to satisfy condition (1).
    Furthermore, if two partitions have this property but are not the same,
    then the first index in \(S\) which falls in different intervals between the partitions must be in the intervals of \(A\) and \(B\) that match an interval in \(S\) thus it is essential for \(A\) and \(B\) ---
    contradicting condition (1) --- thus the partition is unique. 
\end{proof}

We are now ready to move on to the enumeration algorithm; however, before we do, we prove a
claim that will let us assume \(A\) and \(B\) have no common prefix.
This removes an edge case for the graph structure we produce in the next
section.

\begin{lemma}
    Let \(A = PX\) and \(B = PY\) for some \(P, X, Y \in \Sigma^{*}\).
    Then we have
        \(\mcsup(A, B) = \{ PZ: Z \in \mcsup(X, Y) \}.\)
    \label{lemma:prefixlemma}
\end{lemma}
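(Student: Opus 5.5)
It suffices to prove the case \(|P|=1\) and then induct on \(|P|\). Write \(P=c\), so \(A=cX\) and \(B=cY\). The induction step is immediate: for \(P=cP'\), apply the one-letter case to get \(\mcsup(A,B)=\{cW: W\in\mcsup(P'X,P'Y)\}\). The induction hypothesis then gives \(\mcsup(P'X,P'Y)=\{P'Z: Z\in\mcsup(X,Y)\}\), and combining the two yields the statement.

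\emph{Key structural claim.} The first step is to show that every common supersequence \(S\) of \(cX\) and \(cY\) that is minimal satisfies \(S[1]=c\). Let \(p_A=\lembd{A}{S}(1)\) and \(p_B=\lembd{B}{S}(1)\) be the first occurrences of \(c\) used by the left embeddings. Both are simply the first index of \(c\) in \(S\), so \(p_A=p_B=:p\). Every index \(i<p\) is then essential for neither string. The left embedding of \(A\) into \(S\) avoids \(i\), so deleting \(i\) keeps \(A\subseteq S\), and the same holds for \(B\). By \cref{lem:ess-iff}, minimality therefore forces \(p=1\).

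\emph{From \(\mcsup(A,B)\) to \(\{cZ\}\).} Given \(S\in\mcsup(A,B)\), write \(S=cZ\). Since the left embeddings of \(A\) and \(B\) send index \(1\) to \(1\), the remaining letters embed into \(Z\), so \(X\subseteq Z\) and \(Y\subseteq Z\). If some proper subsequence \(Z'\) of \(Z\) were a common supersequence of \(X\) and \(Y\), then \(cZ'\) would be a proper subsequence of \(S\) containing both \(A\) and \(B\), contradicting minimality. Hence \(Z\in\mcsup(X,Y)\).

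\emph{From \(\{cZ\}\) to \(\mcsup(A,B)\).} Let \(Z\in\mcsup(X,Y)\). Clearly \(cZ\) is a common supersequence of \(A\) and \(B\). Suppose a proper subsequence \(S'\) of \(cZ\) is still a common supersequence of \(A\) and \(B\); we may shrink it further to an element of \(\mcsup(A,B)\) by the Observation. By the key claim, \(S'=cZ''\), and by the previous paragraph \(Z''\) is a common supersequence of \(X\) and \(Y\). We also need \(Z''\) to be a proper subsequence of \(Z\). Since \(cZ''\subseteq cZ\) with \(|Z''|<|Z|\), embed \(cZ''\) into \(cZ\) by the left embedding. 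Its first letter \(c\) lands on index \(1\) of \(cZ\), since that index carries \(c\) and is the earliest possible position. Consequently \(Z''\) embeds into \(Z\), so \(Z''\) is a proper subsequence of \(Z\). This contradicts the minimality of \(Z\), so \(cZ\in\mcsup(A,B)\).

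\emph{Main obstacle.} The only delicate point is this last embedding step, namely justifying that a subsequence \(cZ''\) of \(cZ\) yields \(Z''\subseteq Z\). The greedy left-embedding argument above handles it.
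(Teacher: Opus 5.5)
Your proof is correct, and it follows the same skeleton as the paper: reduce to $|P|=1$, show that every minimal common supersequence of $cX$ and $cY$ begins with $c$, and get both inclusions from minimality.

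The differences are in the justifications. The paper obtains $S[1]=c$ by appealing to the interval-partition characterization of \cref{thm:bicondtional}. You derive it directly from the left embeddings and \cref{lem:ess-iff}: every index before the first $c$ is avoided by both left embeddings, hence inessential. That route is more elementary and self-contained.

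For the inclusion $\{cZ : Z\in\mcsup(X,Y)\}\subseteq\mcsup(A,B)$, the paper only remarks that index $1$ of $Z$ is essential for $X$ or $Y$. This shows that deleting the leading $c$ of $cZ$ destroys a containment, but leaves deletions elsewhere in $cZ$ implicit. Your contradiction argument covers every proper subsequence at once. You also make explicit that $X,Y\subseteq W$ when $S=cW$, which the paper takes for granted.

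Two small simplifications are available. First, the step you flag as the main obstacle needs no greedy choice: any embedding of $cZ''$ into $cZ$ sends the letters of $Z''$ to positions $\ge 2$, i.e., into $Z$. Second, in that direction you can skip both the shrinking step and the key claim. If a proper subsequence $S'$ of $cZ$ contains $cX$ and $cY$, then $S'$ with its first letter removed contains $X$ and $Y$ and is a proper subsequence of $Z$, contradicting minimality of $Z$ directly.
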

\begin{proof}
Note it is sufficient to show the claim for \(|P| = 1\).
Let \(U  = \mcsup(X,Y)\) and \(U_P = \{ PZ: Z \in \mcsup(X, Y) \}\).
Note as index \(1\) is essential for \(X\) or \(Y\) in every \(u \in U\),
it is immediate that \(U_P \subseteq \mcsup(A,B)\). Note by~\cref{thm:bicondtional}, we have every \(S \in \mcsup(A,B)\) is of the
form \(S = PW\) for some \(W \in \Sigma^{*}\). If \(S \not \in U_P\),
then there exists \(W' \in U\) where \(W' \subset W\), however then
\(PW' \in \mcsup(A,B)\) where \(PW' \subset S\), which is a contradiction.
    \label{proof:prefixlemma}
\end{proof}

\subsection{Graph for Enumeration}
We now assume that \(A\) and \(B\) have no common prefix other than \(\epsilon\). Using the fact that the characters of \(S\) must match
\(A\) or \(B\), we have the following observation:
\begin{observation}
    \label{obs:3} The first condition of~\cref{thm:bicondtional} 
    can be replaced with
    \(J_i\) fills \(I_i\) when \(K_i\) matches \(I_i\) and
    \(I_i\) fills \(J_i\) when \(K_i\) matches \(J_i\) for
    every \(i \in [l]\). 
\end{observation}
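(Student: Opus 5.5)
Observation~\ref{obs:3} says that, under condition (2) of \cref{thm:bicondtional}, condition (1) can be replaced by the one-sided requirement. When \(K_i\) matches \(I_i\), we require that \(J_i\) fills \(I_i\), meaning \(B[J_i]\) is a subsequence of \(A[I_i]\) and is maximal with respect to extension. The symmetric requirement holds when \(K_i\) matches \(J_i\). The plan is to check both directions for one block at a time. By the symmetry between \(A\) and \(B\), it suffices to treat a block in which \(K_i\) matches \(I_i\), so that \(S[K_i]=A[I_i]\) and the two strings are literally equal.

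For the forward direction, assume condition (1) holds. Then \(K_i\) is full for \(J_i\) in \(B\), so \(B[J_i]\subseteq S[K_i]=A[I_i]\). The maximality clause says that no interval strictly containing \(J_i\) gives a substring of \(B\) that is still a subsequence of \(S[K_i]\). Since \(S[K_i]\) and \(A[I_i]\) are the same string, this is exactly the statement that no larger interval of \(B\) embeds into \(A[I_i]\). Hence \(J_i\) fills \(I_i\). Condition (1) also asks that \(K_i\) be full for \(I_i\) in \(A\). This part is automatic, because \(K_i\) is a match for \(I_i\) and Observation~\ref{obs:1} says a match is full.

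For the converse, assume \(J_i\) fills \(I_i\) and \(K_i\) matches \(I_i\). Observation~\ref{obs:1} again gives that \(K_i\) is full for \(I_i\). For \(J_i\), we have \(B[J_i]\subseteq A[I_i]=S[K_i]\). Maximality transfers back to \(S[K_i]\) by the same literal identification of the two strings. So \(K_i\) is full for \(J_i\), and condition (1) holds for this block.

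The only delicate point is that "fills" is defined relative to the ambient index sets \(\interval{A}\), \(\interval{B}\), \(\interval{S}\), so "larger interval containing \(J_i\)" must mean the same thing on both sides. It does: the comparison is over intervals of \(B\) in both cases, and only the target string \(S[K_i]\) versus \(A[I_i]\) changes, and these are equal. I expect this bookkeeping to be the main step to state carefully. The rest follows from the equality \(S[K_i]=A[I_i]\) and Observation~\ref{obs:1}.
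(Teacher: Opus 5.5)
Your proposal is correct and takes essentially the paper's approach. The paper justifies the observation in one line, from condition (2): each \(K_i\) satisfies \(S[K_i]=A[I_i]\) or \(S[K_i]=B[J_i]\). Fullness of \(K_i\) for the unmatched interval is therefore literally the same statement as that interval filling the matched one, and fullness for the matched interval is automatic by Observation~\ref{obs:1}; you spell out this block-by-block bookkeeping, which the paper leaves implicit.
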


Using this observation, paired with the fact that~\cref{thm:unique-bicondition}
requires that we alternate between matching \(A\) and \(B\) when their
longest common prefix is \(\epsilon\), we define the following
edge-labeled bipartite graph:

\begin{definition}\label{def:mcsup-graph} 
    Given two strings, \(A\) and \(B\),
    define a directed bipartite graph \(\supgraph{A}{B}\)
    with its vertices \(V\) partitioned into \(V_A\) and \(V_B\).
    Let \(X \in \{A,B\}\),
    we define \(V_X\) consisting of tuples of the form \(\ftup{Y}{y}{X}{x}\) where \(x \in \intv{0}{|X|+1}\) and \(y \in \intv{0}{|Y|+1}\) with \(Y \in \{A,B\} \setminus \{X\}\).
    Define \(V_{Y}\) similarly.
    The vertex \(\ftup{X}{x}{Y}{y}\in V_Y\) has an edge to
    \(\ftup{Y}{y'}{X}{x'} \in V_X\) if they satisfy both of the following conditions:
    \begin{description}
        \item[\emph{Closed-fill}:] 
            \((x,x')\) fills \([y, y'+1) \cap \itv{Y}\) in \(Y\) using \(X\),
        \item[\emph{Open-fill}:]
            \((x,x')\) fills \((y, y'+1) \cap \itv{Y}\) in \(Y\) using \(X\).
    \end{description}
    \noindent
    We further define the label of this edge to be the substring in \(Y\)
    corresponding to the interval \([y, y'+1) \cap \itv{Y}\), which we refer to as the interval of this edge.
\end{definition}

Given a path in \(\supgraph{A}{B}\), the label of the path is the concatenation of the substrings labeling its edges
    following the order of the edges appearing
    in the path.
    We call the vertices \((A,0,B,0)\) and \((B,0,A,0)\) the start nodes in
    \(\supgraph{A}{B}\) and
    \((A,|A|+1,B,|B|+1)\), \((B,|B|+1,A,|A|+1)\) the end nodes.
    We define the \(st\)-paths of \(\supgraph{A}{B}\) to be the paths from a
    start node to an end node.
    Finally we define the graph \(\stsupgraph{A}{B}\) as the subgraph of
    \(\supgraph{A}{B}\) induced by the vertices of the \(st\)-paths.
    We now show that the labels of \(st\)-paths in \(\supgraph{A}{B}\)
    correspond exactly to the minimal common supersequences of \(A\) and \(B\).

\begin{figure}
    \centering
    \includegraphics[width=\textwidth]{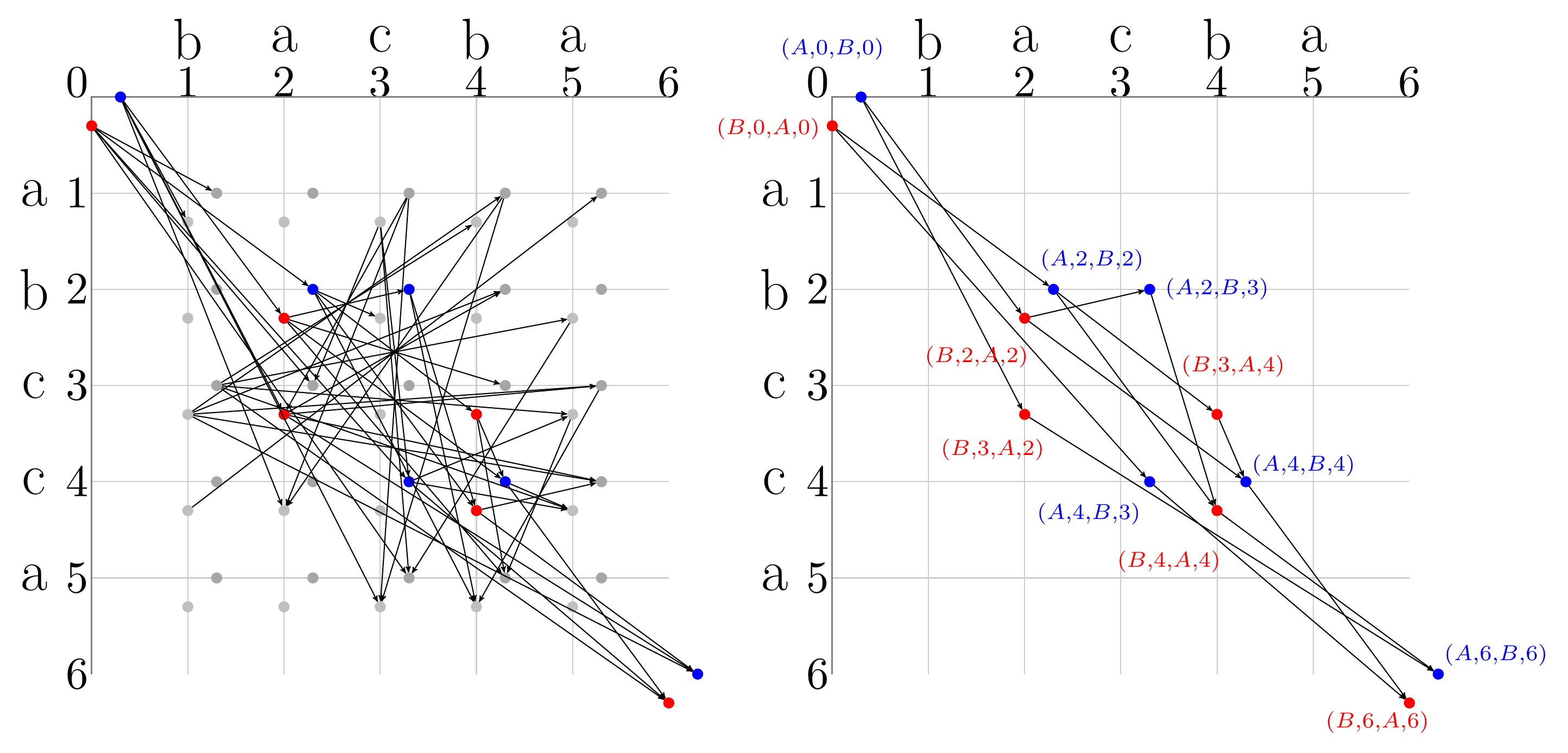}
    \caption{Depicted on the left is \(\supgraph{A}{B}\) containing all nodes
    with the vertices of \(\stsupgraph{A}{B}\) colored in terms of their respective partitions. Red vertices belong partition \(V_A\) and blue vertices belong to partition \(V_B\). On the right we have \(\stsupgraph{A}{B}\)
    with the vertices labeled. \(A=bacba, B=abcca\).}
    \label{fig:enum-graph}
\end{figure}
    
\begin{theorem}
    There exists a bijection between the \(st\)-paths of \(\supgraph{A}{B}\) and the set \(\mcsup({A},{B})\).
\end{theorem}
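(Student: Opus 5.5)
The plan is to build the bijection through the unique partition of \cref{thm:unique-bicondition}. Throughout I assume, by \cref{lemma:prefixlemma}, that \(A\) and \(B\) have no common nonempty prefix. I define a map \(\Lambda\) sending an \(st\)-path to its label. I then show three things: \(\Lambda\) always yields a minimal common supersequence, \(\Lambda\) is injective, and every \(S\in\mcsup(A,B)\) is the label of some \(st\)-path.

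The key dictionary is the following. A vertex \(\ftup{X}{x}{Y}{y}\in V_Y\) records a state in which the next block of \(S\) copies \(Y\) starting at index \(y\), while \(X\) has been consumed up to, but not including, index \(x\). An edge to \(\ftup{Y}{y'}{X}{x'}\) then produces a block \(K\) with \(S[K]=Y[y,y'+1)\), which matches \(J=[y,y'+1)\) in \(Y\). The corresponding block of \(X\) is the interval \([x,x')\), read left-closed in the partition.

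\emph{Closed-fill} says that this \(X\)-block fills the \(Y\)-block, which is condition~(1) of \cref{thm:bicondtional} in the form of Observation~\ref{obs:3}. \emph{Open-fill} says the \(X\)-block still fills after the left endpoint of \(K\) is removed, which is restriction~(2) of \cref{thm:unique-bicondition}. Bipartiteness forces the matched string to alternate between consecutive edges, which is restriction~(1). I will check that the endpoint conventions (open interval \((x,x')\), the \(\cap\,\itv{Y}\), the sentinels \(0\) and \(|X|+1\)) make consecutive blocks tile \(\itv{A}\), \(\itv{B}\) and the output string exactly, with the next edge starting at \(x'\) in \(X\) and \(y'+1\) in \(Y\).

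\textbf{Soundness.} Given an \(st\)-path, I read off the three sequences of intervals \((I_i)\), \((J_i)\), \((K_i)\) as above. Because the path runs from a start node to an end node, these sequences partition \(\itv{A}\), \(\itv{B}\) and \(\itv{S}\), where \(S\) is the label. By construction they satisfy conditions~(1)--(2) of \cref{thm:bicondtional}, so by its backward direction \(S\in\mcsup(A,B)\).

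\textbf{Injectivity.} The partition read off a path also satisfies the two restrictions of \cref{thm:unique-bicondition}. That theorem says the partition is determined by \(S\). The partition in turn determines every vertex on the path, since each vertex is given by block endpoints, and it determines which start node is used, since that is fixed by whether \(K_1\) matches \(A\) or \(B\). Hence distinct paths have distinct labels. The hypothesis of distinct strings needed by \cref{thm:unique-bicondition} holds after prefix removal except in the trivial case \(A=B\), which I handle separately: there the graph has a single trivial path and \(\mcsup(A,A)=\{A\}\).

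\textbf{Surjectivity.} Given \(S\in\mcsup(A,B)\), I take its unique partition from \(\cref{thm:unique-bicondition}\). I translate each triple \((I_i,J_i,K_i)\) into an edge as in the dictionary, using Observation~\ref{obs:3} and restriction~(2) to verify closed-fill and open-fill. Restriction~(1) ensures successive edges alternate between \(V_A\) and \(V_B\) and share endpoints, so together they form a path. Its first vertex is a start node because \(I_1\) and \(J_1\) begin at \(0\), and its last vertex is an end node because the last intervals contain \(|A|+1\) and \(|B|+1\).

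I expect the main obstacle to be the index bookkeeping in the dictionary. Closed versus open endpoints, the sentinel indices \(0\) and \(|X|+1\), and empty blocks such as \((m,m+1)\) must line up exactly with the left-closed convention of \cref{thm:unique-bicondition}. A further delicate point is the first block, where the no-common-prefix assumption is what makes the choice of start node unambiguous. I would first verify the dictionary carefully on the example of Figure~\ref{fig:enum-graph} to pin down these conventions, and only then write the three directions.
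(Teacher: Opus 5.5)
Your plan follows the paper's own argument. The label map is sound by the backward direction of \cref{thm:bicondtional}, and injectivity and surjectivity come from the canonical partition of \cref{thm:unique-bicondition}, with closed-fill, open-fill and bipartiteness supplying condition~(1), restriction~(2) and alternation. Your pairing is the right one; the paper's text swaps the two restrictions.

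There is a genuine gap in injectivity, and the paper's proof shares it. The graph as defined contains edges whose label interval contains no index. The no-common-prefix assumption does not remove them, contrary to your remark that it makes the start node unambiguous. For instance, \((0,1)\) fills both \([0,1)\cap[0,|B|+1]\) and \((0,1)\cap[0,|B|+1]\) in \(B\) using \(A\): both substrings are empty, and every larger interval of \(A\) contains index \(1\). Hence \((A,0,B,0)\to(B,0,A,1)\) is an edge labelled \(\epsilon\). On such a path \(K_1\) is empty, so it matches both strings and does not determine the start node. An empty block also satisfies alternation vacuously, so the uniqueness in \cref{thm:unique-bicondition} no longer pins the path down.

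Concretely, take \(A=aa\), \(B=b\). The distinct \(st\)-paths \((B,0,A,0)\to(A,2,B,1)\to(B,2,A,3)\) and \((A,0,B,0)\to(B,0,A,1)\to(A,2,B,1)\to(B,2,A,3)\) are both labelled \(aab\). Add the paths for \(aba\) and \(baa\), and there are more \(st\)-paths than the three minimal common supersequences \(aab\), \(aba\), \(baa\) of \(aa\) and \(b\). So for the graph as literally defined the statement fails. Edges whose label interval is empty or consists only of a sentinel create further duplicates: \([y,y)\) in mid-path, or \([|Y|+1,|Y|+2)\cap[0,|Y|+1]\) at the end. Your special case \(A=B\), reduced to \(\epsilon,\epsilon\), has the two \(st\)-paths \((A,0,B,0)\to(B,1,A,1)\) and \((B,0,A,0)\to(A,1,B,1)\), not one.

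To repair this you must restrict the edge set to edges satisfying at least the following:
\begin{itemize}
\item the label interval \([y,y'+1)\cap[0,|Y|+1]\) contains an index of \(Y\);
\item the filling interval has \(x<x'\). Otherwise the empty interval \((3,3)\) yields yet another path \((B,0,A,0)\to(A,3,B,1)\to(B,2,A,3)\) labelled \(aab\).
\end{itemize}
You must then check that every block of the canonical partition of \cref{thm:unique-bicondition} contains an index of \(S\), so that surjectivity survives the restriction. Only then is every block read off a path nonempty, and ``the partition determines every vertex and the start node'' becomes a valid step.

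Separately, correct your dictionary. The block of \(X\) consumed by an edge out of \((X,x,Y,y)\) is \((x,x')\), i.e. indices \(x+1,\dots,x'-1\), not \([x,x')\). This is because \(x\) is the last index copied by the incoming edge, whose interval is \([x_{\mathrm{prev}},x+1)\). Read as \([x,x')\), consecutive blocks of \(X\) would overlap. Your later remark that the next edge starts at \(x'\) in \(X\) and at \(y'+1\) in \(Y\) is the correct convention.
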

\begin{proof}
    Let \(S\) be the label of some \(st\)-path in \(\supgraph{A}{B}\).
    Using Observation 5.3, we have \(S \in \mcsup(A,B)\).
    Let \(L_i\) be the index of the first character in the label of the \(i\)-th edge in \(S\) and \(l_i\) be the number of indices in the interval for the \(i\)-th edge. 
    Let \(K_i = [L_i, L_i+l_i+1) \cap \itv{S}\) and note that it corresponds to the substring added by the \(i\)-th edge.
    Further the closed-fill and open-fill condition show that \(\idseq{K}{i}{l}\) must satisfy condition (1) of \cref{thm:unique-bicondition} and, as \(\supgraph{A}{B}\) is bipartite, condition (2) also holds.
    Thus the labels of the \(st\)-paths define
    an injection into \(\mcsup(A,B)\).
    We see it is surjective since any \(W \in \mcsup(A,B)\) can be partitioned into intervals that match the strings which correspond to the labels of \(\supgraph{A}{B}\) by~\Cref{thm:unique-bicondition,thm:bicondtional}. 
\end{proof}

\subsection{Speeding Up the Construction of \(\stsupgraph{A}{B}\)}

Our primary strategy for speeding up the construction of 
\(\stsupgraph{A}{B}\) is to find intervals of \(A\) and 
\(B\) that meet the open-full and closed-full conditions. We show that these conditions can easily be computed via dynamic programming.
In this end, we define 
\[l_{X,Y}(x, y, y') = \max \{x' \in [|X|]: X[(x,x')] \subseteq Y[[y,y']]\}.\]
We note the following recurrence relation.
\begin{equation}\label{ref:recur-rel}
    l_{X,Y}(x,y,y'+1) =
    \begin{cases}
        l_{X,Y}(x, y, y') + 1 \quad &\text{if }X[l_{X,Y}(x,y,y')+1] = Y[y'+1]\\
        l_{X,Y}(x, y, y')  &\text{otherwise.}
    \end{cases}
\end{equation}
When \(y = y'-1\), we have \(l_{X,Y}(x,y,y') = x + 1 \text{ for any } x \in [|X|].\)
More importantly, the following observations allow us to find intervals of \(X\) which fill a given interval of \(Y\).

\begin{lemma}\label{lem:fil-cond}
    Let \(X,Y \in \{A,B\}\) such that \(X \neq Y\).
    Let \((x,x') \in \interval{X}\) and let \([y, y'+1) \in \interval{Y}\)
    that only use indices of \(X\) and \(Y\) respectively.
    We have \((x, x')\) fills \([y, y'+1)\) in \(Y\) using \(X\)
    if and only if
    \begin{equation}
                x' = l_{X,Y}(x,y,y') > l_{X,Y}(x-1,y,y').  
    \end{equation}
\end{lemma}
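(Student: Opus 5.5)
Unfolding the definition of \emph{fills} splits the statement into two requirements: the inclusion \(X[(x,x')] \subseteq Y[[y,y']]\), and the \emph{maximality} requirement that no interval of \(X\) strictly containing \((x,x')\) gives a subsequence of \(Y[[y,y']]\). The plan is to show that the equality \(x' = l_{X,Y}(x,y,y')\) controls extension to the right, and the strict inequality \(l_{X,Y}(x,y,y') > l_{X,Y}(x-1,y,y')\) controls extension to the left.

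First I will record two elementary facts about \(l_{X,Y}\), both proved by greedy left-to-right matching (equivalently, the left embedding \(\lembd{\cdot}{\cdot}\)).
\begin{enumerate}
\item \emph{Prefix closure.} For fixed \(x\), the set \(\{x' : X[(x,x')] \subseteq Y[[y,y']]\}\) is downward closed, because every prefix of a subsequence is again a subsequence. Hence \(X[(x,x')] \subseteq Y[[y,y']]\) holds exactly when \(x' \le l_{X,Y}(x,y,y')\). This is the property the recurrence \eqref{ref:recur-rel} computes.
\item \emph{Monotonicity in the left endpoint.} We have \(l_{X,Y}(x-1,y,y') \le l_{X,Y}(x,y,y')\). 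Indeed, if \(X[(x-1,x'')]\) embeds in \(Y[[y,y']]\), then dropping its first character \(X[x]\) shows that \(X[(x,x'')]\) also embeds.
\end{enumerate}

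For the forward direction, assume \((x,x')\) fills \([y,y'+1)\).
\begin{itemize}
\item The inclusion gives \(x' \le l_{X,Y}(x,y,y')\). If the inequality were strict, fact 1 would make \((x,x'+1)\) a strictly larger interval whose substring is still a subsequence, contradicting maximality. So \(x' = l_{X,Y}(x,y,y')\).
\item If \(l_{X,Y}(x-1,y,y') \ge l_{X,Y}(x,y,y') = x'\), then fact 1 gives \(X[(x-1,x')] \subseteq Y[[y,y']]\). This again contradicts maximality.
\end{itemize}
Combining this with fact 2 yields the strict inequality in the statement.

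For the converse, assume \(x' = l_{X,Y}(x,y,y') > l_{X,Y}(x-1,y,y')\).
\begin{itemize}
\item Fact 1 gives the inclusion \(X[(x,x')] \subseteq Y[[y,y']]\).
\item For maximality, any interval strictly containing \((x,x')\) whose endpoints lie in \(\interval{X}\) contains either \((x-1,x')\) or \((x,x'+1)\).
\item The substring \(X[(x,x'+1)]\) is not a subsequence by the definition of \(l_{X,Y}(x,y,y')\) as a maximum.
\item The substring \(X[(x-1,x')]\) is not a subsequence by fact 1 applied at left endpoint \(x-1\), since \(x' > l_{X,Y}(x-1,y,y')\).
\item Since subsequence containment is inherited by substrings, no larger interval works either.
\end{itemize}

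The main obstacle is the bookkeeping of interval types. The definition of fill ranges over all intervals in \(\interval{X}\), including ones that differ from \((x,x')\) only by closing an endpoint. Closing the endpoint \(x\) means including index \(x\), which is the same substring as the open interval \((x-1,x')\). I will argue once, at the start, that this identification holds, so that only the two integer-step enlargements need checking. I will also treat the boundary cases separately:
\begin{itemize}
\item \(x=0\): take \(l_{X,Y}(-1,\cdot)\) to be \(-\infty\), so the strict inequality holds vacuously.
\item \(x' = |X|+1\): here \(X\) is exhausted and cannot be extended to the right.
\end{itemize}
These cases match the convention \(\Phi(0)=0\), \(\Phi(|X|+1)=|Y|+1\) used earlier in the paper.
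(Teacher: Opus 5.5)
Your proposal is correct and takes essentially the paper's route: both unpack \emph{fills} as the inclusion $X[(x,x')]\subseteq Y[[y,y']]$ plus the failure of the two one-step enlargements, with $x'=l_{X,Y}(x,y,y')$ handling the right end and $x'>l_{X,Y}(x-1,y,y')$ the left end. You additionally make explicit what the paper uses tacitly: downward closure in $x'$, and the fact that any strictly larger substring contains $X[x\ldots x'-1]$ or $X[x+1\ldots x']$. One caveat, which applies equally to the paper's proof: the right-end step only works if the maximum defining $l_{X,Y}$ ranges up to $|X|+1$ rather than over $[|X|]$; with the literal definition it fails when $x'=|X|$.
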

\begin{proof}
    Suppose \((x,x')\) fills \([y,y'+1)\) in \(Y\) using \(X\),
    then we have \(X[(x,x')]\subseteq  Y[[y,y'+1)] \),
    \(X[[x,x')] \nsubseteq Y[[y,y'+1)]\), and 
    \(X[(x,x']] \nsubseteq Y[[y,y'+1)]\).
    Thus we have \(x' = l_{X,Y}(x,y,y')\) and
    \(x' > l_{X,Y}(x-1,y,y')\).

    Now suppose \(x' = l_{X,Y}(x,y,y') > l_{X,Y}(x-1,y,y')\).
    As \(x' = l_{X,Y}(x,y,y')\), we know \( X[(x,x')] \subseteq Y[[y,y'+1)]\)
    and \(X[(x,x']] \nsubseteq Y[[y,y'+1)]\).
    Since \( x' > l_{X,Y}(x-1,y,y')  \),  we see \(X[[x,x')] \nsubseteq Y[[y,y'+1)]\).
    Thus \((x,x')\) fills \([y,y'+1)\) in \(Y\) using \(X\).
\end{proof}

We further have the following result to determine the edge of \(\supgraph{A}{B}\)
that only use indices of \(G\) by plugging in the appropriate values to~\cref{lem:fil-cond}.

\begin{corollary}\label{lem:dyn-edge-condition}
    Let \(X,Y \in \{A,B\}\) such that \(X \neq Y\).
    For \(x,x' \in [|X|]\) and \(y,y' \in [Y]\),
    we have \((X,x,Y,y)\) has an edge to \((Y,y',X,x')\) in
    \(\supgraph{A}{B}\) if and only if
    \begin{align}
         x'   &= l_{X,Y}(x,y,y'), \\
         x' & > l_{X,Y}(x-1,y,y'), \\
         x' &= l_{X,Y}(x,y+1,y'), \\
         x' & > l_{X,Y}(x-1,y+1,y').  
    \end{align}
\end{corollary}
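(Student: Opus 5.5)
The plan is to unfold Definition~\ref{def:mcsup-graph} and apply Lemma~\ref{lem:fil-cond} once to each of its two conditions. By the definition, \((X,x,Y,y)\) has an edge to \((Y,y',X,x')\) exactly when two things hold. The first is the closed-fill condition: \((x,x')\) fills \([y,y'+1)\cap\itv{Y}\) in \(Y\) using \(X\). The second is the open-fill condition: \((x,x')\) fills \((y,y'+1)\cap\itv{Y}\) in \(Y\) using \(X\).

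Since \(x,x'\in[|X|]\) and \(y,y'\in[|Y|]\), the intersections with \(\itv{Y}\) change nothing. So both intervals use only genuine indices of \(Y\), and \((x,x')\) uses only indices of \(X\). This is exactly the setting in which Lemma~\ref{lem:fil-cond} applies.

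Applying Lemma~\ref{lem:fil-cond} to the closed interval \([y,y'+1)\) directly gives the first two equations: \(x'=l_{X,Y}(x,y,y')\) and \(x'>l_{X,Y}(x-1,y,y')\).

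For the open-fill condition, I would first observe that over the integers the interval \((y,y'+1)\) has the same indices as \([y+1,y'+1)\). Filling depends only on the substring \(Y[\cdot]\) the interval picks out, so \((x,x')\) fills \((y,y'+1)\) if and only if it fills \([y+1,y'+1)\). Lemma~\ref{lem:fil-cond}, now with \(y+1\) in place of \(y\), then yields the last two equations: \(x'=l_{X,Y}(x,y+1,y')\) and \(x'>l_{X,Y}(x-1,y+1,y')\).

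Taking the conjunction of the two equivalences gives the stated biconditional.

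The main subtlety is the boundary case \(y=y'\). There \([y+1,y'+1)\) contains no indices, so it does not strictly fit the hypothesis of Lemma~\ref{lem:fil-cond}, which asks for an interval using indices of \(Y\). I would handle this with the convention stated after (\ref{ref:recur-rel}): when the \(Y\)-range is empty, \(l_{X,Y}(x,\cdot,\cdot)=x+1\). Under that convention, \((x,x')\) fills an empty interval exactly when \(X[(x,x')]\) is empty and cannot be extended, that is, when \(x'=x+1\). This agrees with the two inequalities via that convention, and I would check it directly.

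Beyond that case, the only thing to confirm is that the lemma's proof never needed \(y\le y'\) beyond the interval being nonempty.
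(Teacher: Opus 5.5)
Your proposal is correct and is essentially the paper's argument. The paper likewise just plugs the closed-fill interval \([y,y'+1)\), and the open-fill interval rewritten as \([y+1,y'+1)\), into Lemma~\ref{lem:fil-cond}; your separate check of the degenerate case \(y=y'\), where the open interval contains no index of \(Y\) and one uses \(l_{X,Y}(x,y+1,y)=x+1\), makes explicit a step the paper leaves implicit.

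One caveat is inherited from the paper rather than introduced by you. Because \(l_{X,Y}\) maximizes only over \([|X|]\), the ``if'' direction of Lemma~\ref{lem:fil-cond}, and hence of the corollary, can fail at \(x'=|X|\). For example, \(X=ab\), \(Y=cb\), \(x=1\), \(x'=2\), \(y=1\), \(y'=2\) satisfy all four conditions, yet \((1,2)\) does not fill \([1,3)\), because \((1,2]\) gives \(b\subseteq cb\). This is repaired by taking the maximum over \([|X|+1]\).
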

It is easy to adapt the prior statement to handle the case when the values
are not indices of \(A\) and \(B\) by setting them to 
the nearest index in their respective string.
Furthermore, we note that for a given vertex in \(\supgraph{A}{B}\)
at most \(\max\{|A|,|B|\}\) values for \(y'\) can satisfy~\cref{lem:dyn-edge-condition}.    
Let \(n = |A| + |B|\). As a consequence we have the following corollary.

\begin{corollary}
    The number of edges in \(\supgraph{A}{B}\) satisfies \(|E(\supgraph{A}{B})| =\ O(n^3)\) and no node has more than
    \(O(n)\) (out-going) edges.
\end{corollary}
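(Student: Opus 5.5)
The plan is to count edges vertex by vertex. There are $O(n^2)$ vertices, and I will show each has $O(n)$ out-edges, which gives $O(n^3)$ edges in total. For the vertex count, each vertex is a tuple $\ftup{X}{x}{Y}{y}$ with $X\in\{A,B\}$ (and $Y$ then determined), $x\in\intv{0}{|X|+1}$ and $y\in\intv{0}{|Y|+1}$. So $|V_A|+|V_B|\le 2(|A|+2)(|B|+2)=O(n^2)$.

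Next, fix a source vertex $\ftup{X}{x}{Y}{y}$. An out-edge goes to some $\ftup{Y}{y'}{X}{x'}$, so it is determined by the pair $(y',x')$. By \cref{lem:dyn-edge-condition}, and by its extension to the boundary values obtained by clamping to the nearest index, the edge can only exist if $x'=l_{X,Y}(x,y,y')$. Here $l_{X,Y}$ is a well-defined function of $(x,y,y')$, so $x'$ is forced once $y'$ is chosen. The endpoint $y'$ ranges over $\intv{0}{|Y|+1}$, which gives at most $|Y|+2\le \max\{|A|,|B|\}+2=O(n)$ candidate targets. The remaining three conditions of \cref{lem:dyn-edge-condition} can only remove candidates, so they do not affect the upper bound. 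Hence every node has $O(n)$ out-edges.

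Multiplying the two bounds gives $|E(\supgraph{A}{B})|\le O(n^2)\cdot O(n)=O(n^3)$.

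The one point that needs care is the boundary values $x\in\{0,|X|+1\}$ and $y'\in\{0,|Y|+1\}$, where \cref{lem:dyn-edge-condition} is not literally stated. For these I will argue directly from the closed-fill condition of \cref{def:mcsup-graph}. Fix $x$ and the interval $[y,y'+1)\cap\itv{Y}$. The definition of ``fills'' requires maximality, and this forces the right endpoint $x'$ of the filling interval: two distinct values $x'_1<x'_2$ cannot both satisfy it, because $(x,x'_2)$ would strictly contain $(x,x'_1)$ while still having $X[(x,x'_2)]\subseteq Y[[y,y'+1)]$, contradicting the maximality of $(x,x'_1)$. So the at-most-one-$x'$-per-$y'$ argument holds uniformly, including at the boundary.
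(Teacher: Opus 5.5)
Your proposal is correct and follows the paper's own argument. There are $O(n^2)$ vertices, and each out-edge of a vertex is determined by $y'$ alone because $x'$ is forced (the paper uses $x'=l_{X,Y}(x,y,y')$), so every vertex has $O(n)$ out-edges and there are $O(n^3)$ edges in total; your direct uniqueness argument from the maximality in the definition of \emph{fills} is a cleaner way to handle the boundary values, which the paper only covers by clamping to the nearest index.
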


We further extend this to the following lemma.

\begin{lemma}\label{lem:graph-const}
    The nodes of \(\stsupgraph{A}{B}\) can be found in \(O(n^3)\)
    time using only \(O(n^2)\) space.
\end{lemma}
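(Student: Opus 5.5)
The plan is to find the vertices of \(\stsupgraph{A}{B}\) as the vertices that are reachable from a start node and can also reach an end node. Both searches run over \(\supgraph{A}{B}\), whose edges are generated on the fly from \cref{lem:dyn-edge-condition}, so the \(O(n^3)\) edges are never stored. The graph has \(O(n^2)\) vertices, namely the tuples \((X,x,Y,y)\) with \(X\in\{A,B\}\), \(x\in\intv{0}{|X|+1}\) and \(y\in\intv{0}{|Y|+1}\). Two boolean tables indexed by these tuples (``forward-reachable'' and ``backward-reachable'') therefore use \(O(n^2)\) space. The answer is their conjunction.

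\textbf{Step 1: precompute \(l_{X,Y}\) compactly.} A full table of \(l_{X,Y}(x,y,y')\) has \(\Theta(n^3)\) entries, which is too much space. Instead we sweep: for each fixed left endpoint \(y\) of \(Y\), compute the row \(l_{X,Y}(\cdot,y,y')\) for all \(x\) and all \(y'\ge y\). Such a row has \(O(n^2)\) entries and is filled in \(O(n^2)\) time by the recurrence~\eqref{ref:recur-rel}, extending \(y'\) one step at a time from the base case. The edge test in \cref{lem:dyn-edge-condition} uses only the two consecutive rows \(y\) and \(y+1\). So we keep just those two rows, \(O(n^2)\) space in total, and process \(y\) in decreasing order so that row \(y+1\) is always available when row \(y\) is built. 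Over all \(y\) this costs \(O(n^3)\) time.

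\textbf{Step 2: the searches.} For a fixed \(y\), with rows \(y\) and \(y+1\) in hand, every vertex of the form \((X,x,Y,y)\) can enumerate its out-edges. For each \(y'\) we check the four conditions of \cref{lem:dyn-edge-condition} in \(O(1)\) time, adjusting boundary values to the nearest index as the paper remarks. This gives \(O(n)\) time per vertex and \(O(n^3)\) time over all vertices.

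The difficulty is ordering. A plain BFS visits vertices in an arbitrary order, whereas our rows arrive grouped by \(y\). The fix is to notice that the graph is a DAG in which every edge \((X,x,Y,y)\to(Y,y',X,x')\) satisfies \(y'\ge y\) and \(x'>x\) (when \(x'=x\) it can only involve the empty boundary cases). The index sum \(x+y\) therefore strictly increases along every path, apart from degenerate empty-label edges, which must be excluded or handled separately. Reachability can then be propagated in a topological order.

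\textbf{Main obstacle: reconciling the topological order with the \(y\)-grouping of the rows.} I would avoid reconciling them at all by computing and storing each vertex's out-adjacency in compressed form. The important observation is that for fixed \((x,y)\), only \(O(1)\)-describable sets of \(y'\) work: \(x'=l(x,y,y')\) is monotone in \(y'\), so the valid \(y'\) form ranges. Still, the total number of edges can be \(\Theta(n^3)\), so storing all of them is not an option.

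The alternative I would try first sidesteps storage by using two passes.

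\emph{Backward-reachability.} Process \(y\) in decreasing order. At stage \(y\), a vertex \((X,x,Y,y)\) is co-reachable iff some out-neighbour \((Y,y',X,x')\) is co-reachable. That neighbour's co-reachability was determined at the stage of its own ``\(y\)-coordinate'', which is \(x'\) in the other orientation, and that stage need not be earlier. To repair this, I would interleave the two orientations and process both by decreasing \(x+y\), i.e. antidiagonal by antidiagonal. The \(l\)-values needed at a given antidiagonal are recomputed from the recurrence in \(O(n)\) time per vertex: compute \(l(x,y,\cdot)\) and \(l(x-1,y,\cdot)\) for all \(y'\) by scanning \(Y\) once, and likewise with \(y+1\). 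This needs no stored three-dimensional table. It gives \(O(n)\) time and \(O(n)\) scratch space per vertex, so \(O(n^3)\) time and \(O(n^2)\) space overall.

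\emph{Forward-reachability.} Run the symmetric pass in increasing order of \(x+y\), pushing reachability from each reachable vertex to its freshly enumerated out-neighbours.

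Taking the conjunction of the two tables yields the vertex set of \(\stsupgraph{A}{B}\) within the claimed bounds.
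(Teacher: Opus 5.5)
Your final two-pass algorithm is correct, but it reaches the bound by a different route than the paper. The paper runs a depth-first search directly on the implicit graph. It generates each vertex's out-edges lazily from the recurrence (\(O(n)\) work per vertex) and stores, per vertex, only the last \(y'\) at which an edge was found, so no adjacency list is ever materialised.

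You instead use the potential ``position in \(A\) plus position in \(B\)''. It strictly increases along every edge with a nonempty label, because the target has \(x'=l_{X,Y}(x,y,y')\ge x+1\) and \(y'\ge y\). Sweeping antidiagonals is then a topological order. Reachability (increasing sweep, pushing) and co-reachability (decreasing sweep, pulling from out-neighbours) become two plain dynamic programs. Each vertex's out-edges are recomputed from scratch in \(O(n)\) time, and only two bits per vertex persist.

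This buys explicitness exactly where the paper is terse. The paper does not say how a search that generates only out-edges decides ``connected to an end node''. Post-order memoisation works, but only because the graph is acyclic. Resuming an edge scan from a saved \(y'\) without rescanning also requires keeping the four running \(l\)-values. Your potential argument supplies the acyclicity outright. You are also right to flag empty-label edges: read literally, the definition admits \((X,x,Y,y)\to(Y,y-1,X,x+1)\to(X,x,Y,y)\), a 2-cycle that the paper's bijection theorem must tacitly exclude as well. The paper's route, in turn, only ever touches vertices reachable from a start node, and it shares its lazy edge generation with the enumeration phase.

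Your Step 1 (two stored rows of \(l\) indexed by \(y\)) and the paragraph on compressed adjacency are never used by the final algorithm and can be deleted. The latter is also self-contradictory: if the valid \(y'\) of a vertex form a range, that range is \(O(1)\) data per vertex regardless of the total number of edges.
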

\begin{proof}
    Note that \(|V(\supgraph{A}{B})| = O(n^2)\).
    In order to compute which vertices are in \(st\)-paths, we
    preform a depth-first search to determine the vertices connected to the start and end nodes while computing edges of \(\stsupgraph{A}{B}\) as needed per vertex. 
    To avoid producing the edge lists for the vertices, we save only the last \(y'\) value used to find an edge for each vertex in the graph.
    Thus we only use \(O(n^2)\) space during the computation.
    We are able to compute all edges of \(\supgraph{A}{B}\) in \(O(n^3)\) time
    by spending at most \(O(n)\) time computing the (out-going) edges of each vertex
    using~\cref{ref:recur-rel}.
    Furthermore, as we spend \(O(n^3)\)
    time per depth-first search, the algorithm run in \(O(n^3)\) time.
\end{proof}

We now state the main result.

\begin{theorem}
    There exists an algorithm to enumerate minimal common supersequences of strings \(A\) and \(B\) 
    using \(O(n^3)\) time preprocessing, \(O(n^2)\) space, and 
    \(O(n)\) delay.
\end{theorem}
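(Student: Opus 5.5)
We combine the bijection between \(st\)-paths of \(\supgraph{A}{B}\) and \(\mcsup(A,B)\) with \cref{lem:graph-const}. The enumeration is a depth-first traversal of \(\stsupgraph{A}{B}\) in which each \(st\)-path is emitted exactly once. First, \cref{lemma:prefixlemma} lets us strip the longest common prefix \(P\) of \(A\) and \(B\) in \(O(n)\) time. We enumerate for the remaining suffixes and prepend \(P\) to every output, which costs \(O(n)\) per output and fits within the delay bound.

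\textbf{Preprocessing.} We run the procedure of \cref{lem:graph-const} in \(O(n^3)\) time and \(O(n^2)\) space. It marks, for every vertex of \(\supgraph{A}{B}\), whether the vertex is reachable from a start node and can reach an end node. This is one forward search and one backward search; the backward search can be done by recomputing in-edges on the fly via \cref{lem:dyn-edge-condition}, again \(O(n)\) per vertex. We do not store edge lists. Instead, for each vertex \((X,x,Y,y)\) we keep only the \(O(n)\)-space-free description needed to regenerate its out-edges in increasing order of \(y'\). This works because, by \cref{ref:recur-rel}, the values \(l_{X,Y}(x,y,y')\), \(l_{X,Y}(x-1,y,y')\), \(l_{X,Y}(x,y+1,y')\) and \(l_{X,Y}(x-1,y+1,y')\) can be advanced from \(y'\) to \(y'+1\) in \(O(1)\) time each. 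This gives an iterator producing the successive out-neighbours of a vertex in amortized \(O(1)\) time per scanned \(y'\), and \(O(n)\) total per vertex. We restrict the iterator to marked (useful) neighbours, so every edge it follows lies on some \(st\)-path.

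\textbf{Enumeration.} We perform a DFS from each start node over useful vertices only. The current path is kept on a stack, and each stack frame holds its vertex and the iterator state: the current \(y'\) and the four \(l\)-values. The label is kept in an output buffer, and each edge appends or removes its substring of \(Y\) on push or pop. When the DFS reaches an end node, it outputs the buffer. Every vertex on the stack lies on an \(st\)-path, so every push leads to an output without dead ends. Hence the work between two consecutive outputs consists of popping back to the branching frame, advancing that frame's iterator, and descending to the next end node. The stack depth is \(O(n)\) because \(x+y\) strictly increases along edges; the single exception is an edge with empty label, which is handled below. Space is therefore \(O(n^2)\) for the marks plus \(O(n)\) for the stack and buffer.

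\textbf{Main obstacle: bounding the delay by \(O(n)\).} A naive bound gives \(O(n)\) per frame times \(O(n)\) frames, because iterators scan over non-useful or edge-less values of \(y'\). We would fix this by charging each scanned \(y'\) to an index of \(Y\) that the path consumes. Along the path the \(Y\)-positions \(y\) strictly increase, and an iterator at a frame only scans \(y'\) up to the \(y'\) of the edge it next emits, which becomes the next frame's starting point. Two kinds of scanning must be controlled: the scanning done on descent, and the scanning done when the DFS backtracks through frames that have no further useful edges. For the backtracking part, we precompute for each vertex the largest \(y'\) giving a useful edge, which takes \(O(n^2)\) space. With this, a frame knows in \(O(1)\) time that it is exhausted and can be popped without scanning. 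For the descent part, the scans telescope, giving \(O(n)\) per output in total. If this telescoping fails, the fallback is to store for each useful vertex a pointer to its next useful edge. This pointer structure is built during preprocessing in \(O(n^3)\) time and also uses \(O(n^2)\) space. It lets each step cost \(O(1)\) plus the label length, and label lengths sum to \(|S|=O(n)\).
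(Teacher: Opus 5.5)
Your proposal is correct and follows essentially the same route as the paper: a depth-first search over the $st$-subgraph that regenerates out-edges on the fly via the recurrence for $l_{X,Y}$, a precomputed largest useful $y'$ per vertex so that exhausted frames are popped in $O(1)$, and a charging argument in which every scanned $y'$ lies inside the interval of the edge eventually emitted, which bounds the work between outputs by $O(|A|+|B|)$. That charging argument does go through: each frame's scan is bounded by the length of the label it emits plus one, and the labels along a path sum to $|S| \le |A|+|B|$. So your fallback is unnecessary, and as described it would not fit in $O(n^2)$ space anyway, since a single ``next useful edge'' pointer per vertex cannot resume iteration after an arbitrary earlier edge without per-edge pointers, i.e.\ $O(n^3)$ space.
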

\begin{proof}
    The enumeration algorithm begins by computing the vertices in \(\stsupgraph{A}{B}\) using the method described in~\cref{lem:graph-const}.
    For each vertex in \(\stsupgraph{A}{B}\), we then spend an additional \(O(n)\) time to find the largest \(y'\) for which the vertex has an edge.
    All this can be done in \(O(n^3)\) time, resulting in \(O(n^3)\) preprocessing
    time.
    We then find all paths from a start node to an end node using depth-first search by computing the edges as needed as in~\cref{lem:graph-const}, with the
    modification that we stop once the largest \(y'\) value for the node is reached.

    Every time we compute a value using~\cref{ref:recur-rel} and
    do not find a new edge, we are able to add one character to the next
    sequence.
    This follows as this index must be used in the interval of the next edge in the path.
    As any minimal common supersequence is shorter than
    \(|A| + |B|\), we cannot spend more than \(O(n)\) time computing the edges
    between two outputs. Further, by a similar reasoning, we 
    cannot spend more than \(O(n)\) time backtracking during the depth-first
    search. Thus the algorithm has \(O(n)\) delay.
    The \(O(n^2)\) space requirement comes from the space required to
    store the nodes, the \(y'\) value corresponding to their last edge,
    and the space required for preprocessing.
\end{proof}

\section{An \(O(kn\log n )\) Time Algorithm for Computing an MCS of \(k\) Strings} \label{sec:kmcs}
We now move to reducing a common supersequence of \(k\) strings to a
minimal common supersequence.
It is straightforward to modify~\cref{alg:2-reduce} to accommodate \(k\) strings and achieve an algorithm with \(O(k^2n)\) runtime.
As it is meaningful to consider minimal common supersequences among many small strings, it makes sense to consider algorithms which avoid the heavy dependence on \(k\). 

We make two changes to improve the runtime when \(k\) is large.
First, we store our
output in a data structure that allows us efficiently
search the first occurrence of character after a given index.
This allows us avoid sweeping through \(S\) for every string as we do
from line~\ref{line:check-a-stt} to line~\ref{line:check-b-end} in~\cref{alg:2-reduce}.
Second, we store all right embedding indices of the supersequence together with the identifiers of their corresponding strings in a single array, where the pairs are sorted in ascending order by index. This organization allows us to update the indices associated with left embeddings without having to repeatedly scan the separate arrays that store the right embeddings for each string.

In the pseudocode of~\cref{alg:k-reduce}, we denote the procedure that constructs this sorted array for a supersequence \(S\) of strings \(A_1, \ldots, A_k\) by\\
\(\textsc{MergeRightEmbeddings}(S\,;\, A_1, \ldots, A_k)\). It is straightforward to see that this array can be constructed in \(O(N \log N)\) time where \(N\) is the sum of the lengths of the \(k\) input strings. For easier comprehension we give the output of \textsc{MergeRightEmbeddings} for two strings \(A_1=abbc\) and \(A_2=ac\) for the supersequence \(S=abccbacc\) in \cref{fig:ds-fig}. 



\subsubsection{A Data Structure for Searching for Occurrences}
The structure we consider implicitly stores a subsequence \(S\) of \(T\in \Sigma^*\) by keeping track of
three things: (1) the length of \(|S|\), (2) the number of occurrences of
each character in the string, and (3) for each character \(\sigma \in \Sigma\), an array which stores
in ascending order the indices where this character occurs in \(T\).
Let \(n_\sigma\) be the number of occurrences of \(\sigma\) in \(T\).
The structure further pre-allocates an array of size \(n_\sigma\), for each \(\sigma\).
\Cref{fig:ds-fig} depicts an example of the occurrence arrays for
a subsequence \(S\) of string \(T\).
To append the character \(\sigma\) to \(S\), we update the structure by
writing the value \(|S|+1\) into the next available cell in the array corresponding to \(\sigma\). This update can be done in \(O(1)\) time
as it only requires the position of the next available cell and length \(|S|\).
Given an index \(i\) in \(S\) and a character \(c\), we perform a binary search on the occurrence array associated with \(c\) to determine the smallest index \(j > i\) such that \(S[j] = c\).
If no such index exists, we return \(|S|+1\).

Lastly, note that \(S\) can be reconstructed from the data structure with the aid of the sequence \(T\). We iterate over the characters of \(T\), and for each character \(c\), we lookup its corresponding occurrence array in the data structure. If the next index of \(S\) appears in that occurrence array, we output \(c\); otherwise, we proceed to the next character of \(T\) and repeat the process. This reconstruction procedure runs in \(O(|T|)\) time in the worst case.


\begin{figure}
    \includegraphics[width=1.0\textwidth]{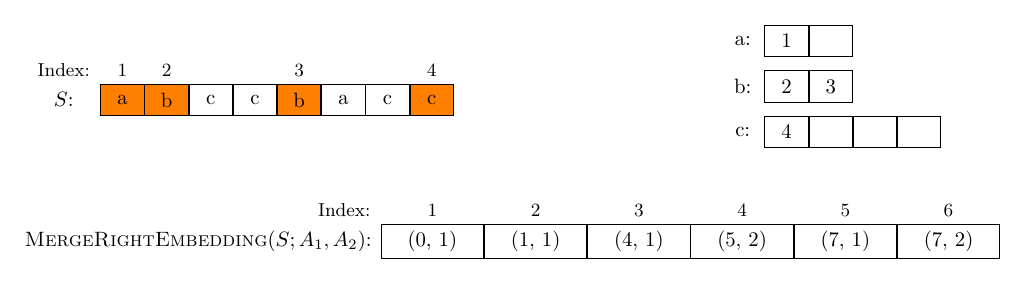}
    \caption{
        On the left hand side, we show characters used in a right embedding 
        of \(A_1 =abbc\) into \(S=abccbacc\) in orange cells.
        On the right, we depict the output data structure for \(S\) and
        \(A_1\).
        On the bottom is the result of \(\textsc{MergeRightEmbedding}(S;A_1,A_2)\) where \(A_2=ac\).
    }    \label{fig:ds-fig}
\end{figure}

We summarize the results for this in the following lemma: 
\begin{lemma}
    \label{lem:ds-lem}
    Given a sequence \(T\) containing \(n_\sigma \in \Z_{>0}\) occurrences 
    of \(\sigma \in \Sigma\), there exists a  
    data structure that can be created in \(O(|T|)\) time
    which stores a string \(S \subseteq T\) and provides the following operations:
    \begin{itemize}
        \item a method \(\textsc{Insert}(\sigma)\) that appends \(\sigma\) to
            \(S\) in \(O(1)\) time as long as \(\sigma\) does
            not appear more than \(n_\sigma\) times in \(S\cdot\sigma.\)
    \item a method \(\textsc{FindNext}(\sigma, i)\) that returns the next occurrence of \(\sigma\) after index \(i\)
            or \(|S|+1\) if no such occurrence exits.
        \item a method \(\textsc{BuildStr()}\) that
            produces a copy of \(S\) in \(O(n)\) time.
    \end{itemize}
\end{lemma}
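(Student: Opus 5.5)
The plan is to realize the structure described just before \cref{lem:ds-lem} and check each operation's cost separately. For construction, scan \(T\) once to count \(n_\sigma\) for every \(\sigma \in \Sigma\). Then allocate, for each \(\sigma\), an array \(\mathrm{occ}_\sigma\) of length \(n_\sigma\) together with a fill counter \(c_\sigma\) initialized to \(0\), and a global counter \(\ell = |S|\) initialized to \(0\) (so \(S = \epsilon\) initially). The total allocated size is \(\sum_\sigma n_\sigma = |T|\).

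I will treat the alphabet as the set of characters occurring in \(T\), indexed by integers after a relabelling. The counting and the allocation are then \(O(|T|)\), since the characters of \(T\) can be hashed or mapped to \(\{1,\dots,m\}\) with \(m \le |T|\). This relabelling is the one point where I expect care to be needed. If \(\Sigma\) is a general ordered alphabet, mapping it to small integers costs more than linear time, so I would state the lemma under the standard integer-alphabet assumption implicit in the paper.

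The invariant maintained is that \(\mathrm{occ}_\sigma[1..c_\sigma]\) lists, in increasing order, exactly the positions \(p \le \ell\) with \(S[p] = \sigma\).

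\begin{itemize}
\item \(\textsc{Insert}(\sigma)\) writes \(\ell+1\) into \(\mathrm{occ}_\sigma[c_\sigma+1]\), then increments both \(c_\sigma\) and \(\ell\). The precondition that \(\sigma\) occurs at most \(n_\sigma\) times in \(S\cdot\sigma\) guarantees \(c_\sigma+1 \le n_\sigma\), so the write stays inside the preallocated array. Since \(\ell+1\) exceeds every stored position, the sorted order is preserved, and the operation is \(O(1)\).
\item \(\textsc{FindNext}(\sigma,i)\) binary-searches \(\mathrm{occ}_\sigma[1..c_\sigma]\) for the first entry greater than \(i\), returning it if it exists and \(\ell+1\) otherwise. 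Correctness is immediate from the invariant, and the cost is \(O(\log n_\sigma) \subseteq O(\log |T|)\). The lemma states no bound for this operation, and the later \(k\)-string analysis will charge \(O(\log n)\) per call.
\item \(\textsc{BuildStr}()\) uses a pointer \(q_\sigma\) into each occurrence array. For \(p = 1,\dots,\ell\) we must find the \(\sigma\) with \(\mathrm{occ}_\sigma[q_\sigma] = p\).
\end{itemize}

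To do the search in \(\textsc{BuildStr}\) within the stated time, I would follow the paper's idea of scanning \(T\). Walk through \(T\) left to right, and at each character \(c\) check whether \(\mathrm{occ}_c[q_c] = p\). If it does, output \(c\), advance \(q_c\) and \(p\). In either case, advance to the next character of \(T\).

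This is the main obstacle. Correctness of the greedy scan requires that \(S\) embed into \(T\) in a way compatible with this test, and a mismatch could skip a needed character. An equally linear but safer alternative is to fill an output array of length \(\ell\) directly, writing \(\sigma\) into position \(\mathrm{occ}_\sigma[j]\) for every \(\sigma\) and every \(j \le c_\sigma\). This takes \(O(\ell + |T|) = O(|T|)\) time, avoids the subtlety, and is the version I would present.
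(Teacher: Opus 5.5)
Your proposal is correct. The construction, \textsc{Insert} and \textsc{FindNext} are exactly the paper's structure: per-character occurrence arrays preallocated to length \(n_\sigma\), appending \(|S|+1\) on insertion, and binary search. Your integer-alphabet caveat only makes explicit the constant-time access to a character's array that the paper uses tacitly.

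Where you depart from the paper is \textsc{BuildStr}. The paper rebuilds \(S\) by scanning \(T\) and emitting \(c\) whenever the next position of \(S\) is the head of the occurrence array of \(c\). That is the ordinary greedy leftmost subsequence test, and it cannot skip a needed character when \(S \subseteq T\). So under the lemma's hypothesis the ``obstacle'' you identify is not real.

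Your unease does point at something, though. The precondition of \textsc{Insert} only bounds character counts and does not by itself keep \(S \subseteq T\). For \(T = ab\), inserting \(b\) and then \(a\) is allowed and gives \(S = ba\), and the scan of \(T\) then outputs only \(b\). The paper's version is fine in context because in Algorithm~\ref{alg:k-reduce} the inserted characters are those of the supersequence taken left to right, so the stored string really is a subsequence of \(T\).

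Your scatter step uses only the invariant that the occurrence arrays partition \(\{1,\dots,|S|\}\). It is therefore correct for every sequence of legal insertions and never consults \(T\). It runs in \(O(|S|)\) plus the number of distinct characters, hence \(O(|T|)\). It is the more robust of the two at no extra cost.
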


In our psuedocode, we write \(\textsc{OccArrString}\) to
denote the function constructing the data structure in~\cref{lem:ds-lem}
from an input sequence.
Further we write \(|D|\) to denote the size of string stored in the structure \(D\).


%

\begin{algorithm}
\caption{Algorithm for reducing a supersequence of \(k\)}
\label{alg:k-reduce}
\begin{algorithmic}[1]
\Function{ReduceToMinimal}{$S\,;\,A_1 \ldots A_k$}
  \State $outp \gets \textsc{OccArrString}\big(\textsc{GetOccuranceData}(\mathit{S})\big)$ \label{line:k-start-init}
  \State $l\_all[i] \gets 0$ for all $i \in \{1,\dots,k\}$
  \State $r\_all \gets \textsc{MergeRightEmbeddings}(S\,;\, A_1, \ldots A_k)$ \label{line:k-end-init}
  \State $pos \gets 1$
  \State $j\_all \gets 1$
  \While{$pos \leq |S|$} \label{line:k-main-loop}
    \State $(r\_idx, str\_id) \gets r\_all[j\_all]$
    \State $c \gets S[pos]$
    \If{$pos = r\_idx$} \label{line:all-stt}
      \State $new\_top \gets outp.\textsc{FindNext}(c, l\_all[str\_id])$\label{line:k-find-next}.
      \If{$new\_top = |outp|+1$}
        \State $outp.\textsc{Insert}(c)$ \label{line:k-insert}
      \EndIf
      \State $l\_all[str\_id] \gets new\_top$ \label{line:l-inc}
      \State $j\_all \gets j\_all + 1$ \label{line:j-inc}
    \Else
    \State $pos \gets pos + 1$ \label{line:incrementing}
    \EndIf
  \EndWhile
  \State \Return $outp.\textsc{BuildStr()}$
\EndFunction
\end{algorithmic}
\end{algorithm}

\subsection{Reducing a Common Supersequence for \(k\) Strings}

In the following proofs, let \(P_l\) denote the string stored
in \(outp\) and \(S_l = P_l \cdot S\intv{pos}{|S|}\) at the start of
the \(l\)-th iteration of the loop on line~\ref{line:k-main-loop}. 
The largest value of the left embedding
of each string is stored in an array of size \(k\) named \(l\_all\).
The variable \(r\_all\) keeps all values of
the right embeddings in a single array
as describe earlier.
The value of \(j\_all\) maintains the location in the array \(r\_all\) as 
we sweep through \(S\) such that \(r\_all[j\_all]\) is the next possible index which
we can add to the output.

The following lemma shows that \(l\_all\) and \(j\_all\) are correctly maintained during our algorithm's execution.
\begin{lemma}
    For any \(i \in [k]\), let \(j_i\) be the smallest value such that \((\rembd{A_i}{S}(j_i),i)\) is in \(r\_all\intv{j\_all}{}\). 
    At the start of line~\ref{line:k-main-loop} of the \(l\)-th iteration, we have \(A_i \subseteq S_l\) and 
    the following are true:
    \begin{enumerate}
    \item \(\rembd{A_i}{S_l}(j_i) = \rembd{A_i}{S}(j_i)-(pos-|P_l|-1), \)
    \item \(l\_all[i]=\lembd{A_i}{S_l}(j_i-1).\)

    \end{enumerate}
    \label{lem:k-str-loop-prop}
\end{lemma}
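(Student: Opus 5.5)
Induction on the iteration count \(l\) of the loop on line~\ref{line:k-main-loop}, proving the invariant (containment \(A_i \subseteq S_l\) plus properties 1 and 2) for all \(i\in[k]\) at once.

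\emph{Base case.} At the first iteration \(pos=1\), \(P_1=\epsilon\), \(S_1=S\), \(j\_all=1\), and \(l\_all[i]=0\). Then \(j_i=1\) for every \(i\), and property 1 reads \(\rembd{A_i}{S}(1)=\rembd{A_i}{S}(1)-0\). Property 2 reads \(0=\lembd{A_i}{S}(0)\), which holds by the convention \(\Phi(0)=0\). Containment is the hypothesis that \(S\) is a common supersequence.

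\emph{Inductive step.} There are two kinds of iterations.

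Case (a): \(pos \neq r\_idx\). Only \(pos\) increases, so \(S_{l+1}\) is \(S_l\) with the character \(S[pos]\) (at position \(|P_l|+1\)) deleted. Since \(r\_all\) is sorted and \(pos<r\_idx\), no string has a right-embedding index equal to \(pos\), i.e.\ the deleted position lies strictly between \(\rembd{A_i}{S}(j_i-1)\) and \(\rembd{A_i}{S}(j_i)\) in the original coordinates for every \(i\). I will show the following:
\begin{itemize}
\item \(\lembd{A_i}{S_l}(j_i-1)\le |P_l|\). Either it equals the stored \(l\_all[i]\), which is an index of \(outp\), or it is \(0\).
\item Gluing the prefix embedding of \(A_i[1..j_i-1]\) into \(P_l\) with the suffix part of the right embedding (shifted by property 1) gives an embedding of \(A_i\) avoiding the deleted position. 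Hence \(A_i\subseteq S_{l+1}\).
\item Property 1 holds with the shift increasing by one, because a right embedding restricted to the untouched suffix of \(S\) does not change.
\item Property 2 is unchanged, because the prefix \(P_l\) is unchanged.
\end{itemize}

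Case (b): \(pos=r\_idx\) with \(str\_id=t\). Only \(j\_all\) advances, so only \(j_t\) changes, to \(j_t+1\). By property 1, \(r\_idx\) corresponds to position \(|P_l|+1\) of \(S_l\), which carries the character \(c=A_t[j_t]\).
\begin{itemize}
\item \textsc{FindNext} returns the first occurrence of \(c\) in \(P_l\) after \(\lembd{A_t}{S_l}(j_t-1)\), or \(|P_l|+1\) if there is none.
\item If there is none, the algorithm inserts \(c\). Then \(P_{l+1}=P_l c\), \(pos\) is unchanged, and \(S_{l+1}=P_{l+1}\cdot S[pos..]\) has one extra copy of \(c\). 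Containment of all strings is preserved trivially, since a supersequence only grew, and property 1 still holds because the offset \(pos-|P|-1\) drops by one exactly as the coordinates shift.
\item In either case, the new value is by greedy choice the left-embedding value \(\lembd{A_t}{S_{l+1}}(j_t)\), because the left embedding is computed greedily from the left. This gives property 2 for \(t\).
\item For the other strings \(i\neq t\), I must check that properties 1 and 2 are not disturbed when \(P\) grows. Their left embedding of the first \(j_i-1\) characters lies in \(P_l\), which is a prefix of \(P_{l+1}\), so property 2 is unaffected. Their right-embedding suffix is unaffected by the same shift argument as above.
\end{itemize}

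The main obstacle is the case (a) containment claim together with the identification of right embeddings of \(S_l\) with shifted right embeddings of \(S\). To handle it, I will prove the auxiliary fact that the right embedding of the suffix \(A_i[j_i..]\) into \(S_l\) lies entirely within the untouched suffix \(S[pos..]\). I will also show it coincides with \(\rembd{A_i}{S}\) there, since the right embedding is determined greedily from the right end. I will then show that the left-embedded prefix never exceeds \(|P_l|\), because the prefix \(A_i[1..j_i-1]\) has all its right-embedding positions already passed. Combining these gives the gluing argument. I expect the off-by-one bookkeeping of the offset \(pos-|P_l|-1\) across the two cases to need care, but nothing conceptually new beyond the invariant.
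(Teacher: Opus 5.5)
Your proposal is correct and takes the same route as the paper. Both argue by induction over loop iterations, starting from the base case $S_1=S$, $j_i=1$, $l\_all[i]=0$. Both split the inductive step into iterations that only advance $pos$ (an implicit deletion of $S[pos]$) and iterations that process an entry of $r\_all$ via \textsc{FindNext}. Both rely on the greedy nature of left and right embeddings, which are determined by the prefix $P_l$ and the untouched suffix $S[pos..|S|]$ respectively. Your version is more careful than the paper's in two ways: you explicitly prove $A_i\subseteq S_l$ by gluing, which the paper asserts without proof, and you correctly note that after an insertion $S_{l+1}$ temporarily contains an extra copy of $c$.

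One small slip in case (a): the deleted position can equal $\rembd{A_i}{S}(j_i-1)$, when that entry was processed at the current $pos$ in the preceding iteration. So ``strictly between'' should read ``at least $\rembd{A_i}{S}(j_i-1)$ and strictly below $\rembd{A_i}{S}(j_i)$''. This does not break anything, because your gluing argument only uses $\lembd{A_i}{S_l}(j_i-1)\le |P_l|$ and $\rembd{A_i}{S}(j_i)>pos$.
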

\begin{proof}
Suppose that at the beginning of each iteration we have \( i = \textit{str\_id} \), and let \( r\_idx \) denote the right-embedding position of index \( j_i \) in string \( A_i \). Recall that \( j\_all\) stores all right-embedding indices of the supersequence together with the identifiers of the corresponding strings in a single array, sorted in increasing order of the indices.

In the first iteration, we process the first element \( (r\_idx, \textit{str\_id}) \) of \(j\_all \). At this point, we have,
 \[
 l\_all[i] = 0 \quad \text{for all } i \in [k].
 \]
 Since no elements of \( r\_all \) have been processed yet, for every string \( A_i \) we have \( j_i = 1 \) and hence \( j_i - 1 = 0 \). Moreover,
 \[
 S_1 = P_1 \cdot S[1:|S|] = S,
 \]
since \( P_1 \) is the empty string. Therefore, condition~(1) holds immediately, as
 \[
 \rembd{A_i}{S_1}(j_i)
 = \rembd{A_i}{S}(j_i) - (pos - |P_1| - 1)
 = \rembd{A_i}{S}(j_i) - (1 - 0 - 1)
 = \rembd{A_i}{S}(j_i).
 \]
 Condition~(2) also holds trivially in the first iteration, since \( j_i - 1 = 0 \) and \(\lembd{A_i}{S_1}(0) = 0 = l\_all[i]\).

 At the end of iteration \( l \), we either increment the variable \( pos \) or update \( l\_all[i] \) for the selected string \( A_i \) by assigning it the left-embedding value of index \( j_i \). This update occurs only when the condition \( pos = r\_idx \) is satisfied; otherwise, we increment \( pos \) until the condition holds. Each time we skip a position in \( S \), we implicitly delete the corresponding character from the output.

 Once \( pos = r\_idx \), we attempt to match index \( j_i \) of string \( A_i \) to the next available position in \( P_l \). To find this position, we run a
 \(\textsc{findnext}(c, l\_all[i])\)
 query. If there exists an unmatched occurrence of the character \( c \) in \( P_l \), we update \(l\_all[i] \) to that position. Otherwise, we match the character at position \( pos \) and update \( l\_all[i] \) accordingly. In all cases, \( l\_all[i] \) stores the left-embedding position of index \( j_i \).

 In the next iteration, \( l+1 \), we process the subsequent element of \( j_{\text{all}} \), which again corresponds to some string \( A_i \) and a new index \( j_i \). Condition~(1) continues to hold: since index \( j_i - 1 \) has already been matched to a position in \( P_{l+1} \), we can compute the right-embedding position of \( j_i \) in \( S_{l+1} \) by taking \( \rembd{A_i}{S}(j_i) \) and subtracting \(pos - |P_{l+1}| - 1,\)
 which accounts for the characters implicitly deleted from \( S \).

 Finally, condition~(2) also holds, because index \( j_i - 1 \) was matched in an earlier iteration that processed string \( A_i \). Consequently, the current value of \(l\_all[i] \) equals \(\lembd{A_i}{S_{l+1}}(j_i - 1),\) as required.
 \end{proof}
\begin{lemma}
    During the \(l\)-th iteration, line~\ref{line:k-insert} is executed
    if and only if \(t_l = |P_l|+1\)  is essential for some \(A_j\) for \(S_{l'}\) for every
    \(l'\geq l\).
    ~\label{lem:k-str-loop-inv}
\end{lemma}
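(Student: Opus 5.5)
We will derive the claim from \cref{lem:k-str-loop-prop} and the essentiality test of \cref{lem:ess-check}, applied to the strings \(S_l\).

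Fix the \(l\)-th iteration and suppose \(pos = r\_idx\), with \(i = str\_id\) and \(j_i\) the current index of \(A_i\); otherwise line~\ref{line:k-insert} is not reached and nothing needs proving. By \cref{lem:k-str-loop-prop}:
\begin{itemize}
\item \(l\_all[i] = \lembd{A_i}{S_l}(j_i-1)\);
\item \(\rembd{A_i}{S_l}(j_i)\) equals \(\rembd{A_i}{S}(j_i) - (pos-|P_l|-1) = |P_l|+1 = t_l\), i.e.\ the position in \(S_l\) of the current character \(S[pos]\).
\end{itemize}
The call \(\textsc{FindNext}(c, l\_all[i])\) returns the first occurrence of \(c = A_i[j_i]\) after \(\lembd{A_i}{S_l}(j_i-1)\) inside \(P_l\). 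This is exactly \(\lembd{A_i}{S_l}(j_i)\), provided it lies in \(P_l\). If no such occurrence exists in \(P_l\), the first occurrence in \(S_l\) is at \(t_l\), and \(\textsc{FindNext}\) returns \(|P_l|+1\).

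\emph{Forward direction.} Line~\ref{line:k-insert} executes iff \(\lembd{A_i}{S_l}(j_i) = t_l = \rembd{A_i}{S_l}(j_i)\). By \cref{lem:ess-check}, \(t_l\) is then essential for the pair \((j_i, A_i)\) in \(S_l\).

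To extend this to all \(l' \ge l\), we use that later iterations only delete characters at positions \(\ge pos\) in the suffix of \(S\). Characters are only skipped or appended after \(t_l\), so \(P_l c\) is a prefix of every later \(P_{l'}\). Hence:
\begin{itemize}
\item the prefix of \(S_{l'}\) up to \(t_l\) is unchanged, so \(\lembd{A_i}{S_{l'}}(j_i) = t_l\);
\item \(S_{l'} \subseteq S_l\) with the same prefix, so \(\rembd{A_i}{S_{l'}}(j_i) \le \rembd{A_i}{S_l}(j_i) = t_l\).
\end{itemize}
Combined with \(\lembd{} \le \rembd{}\), and with \(A_i \subseteq S_{l'}\) from \cref{lem:k-str-loop-prop}, this gives equality. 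So \(t_l\) stays essential in every \(S_{l'}\).

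\emph{Converse.} Suppose line~\ref{line:k-insert} is not executed in iteration \(l\). Then either \(pos \ne r\_idx\) or \(\textsc{FindNext}\) found an occurrence in \(P_l\). We argue contrapositively, in three steps.
\begin{enumerate}
\item Consider the final string \(S_{\text{end}}\). If \(t_l\) were essential for some \(A_j\) in \(S_{l'}\) for every \(l' \ge l\), it would in particular be essential in \(S_{\text{end}}\).
\item The character \(S[pos]\) considered at iteration \(l\) is not appended at position \(t_l\) in this iteration. Whichever character eventually occupies position \(t_l\) of the output is the one appended in a later iteration (or is the end sentinel).
\item So the meaningful case is a character at position \(t_l\) in \(S_l\) that is subsequently deleted (\(pos\) incremented past it). Such a character is absent from \(S_{l'}\) for larger \(l'\), and there it cannot be essential in the sense needed.
\end{enumerate}

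The delicate part is the bookkeeping. Several iterations can share the same \(pos\), because \(pos\) increments only in the else branch. The statement must therefore be interpreted as: the character at \(pos\) is kept iff some iteration with this \(pos\) inserts it. I would make this explicit, then show that a non-inserting iteration leaves \(\lembd{A_i}{S_l}(j_i) < t_l\). In that case \(t_l\) is not essential for \((j_i, A_i)\) by \cref{lem:ess-check}. Any other pair for which \(t_l\) could be essential would, by the sorted order of \(r\_all\), be handled at an iteration with the same \(pos\), which is covered by the same argument.

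I expect this converse and the interaction between multiple iterations at one \(pos\) to be the main obstacle. The forward direction is a direct application of \cref{lem:ess-check} plus monotonicity of the embeddings under suffix deletions.
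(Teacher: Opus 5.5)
Your forward direction is the paper's argument. \cref{lem:k-str-loop-prop} gives \(l\_all[i]=\lembd{A_i}{S_l}(j_i-1)\) and \(\rembd{A_i}{S_l}(j_i)=|P_l|+1\); \textsc{FindNext} then returns \(\lembd{A_i}{S_l}(j_i)\); \cref{lem:ess-check} gives essentiality in \(S_l\); and monotonicity of the embeddings under deletions past \(t_l\) carries it to every \(S_{l'}\). You even spell out the left-embedding half that the paper leaves implicit. You are also right that several iterations share one \(pos\), so the statement must be read as ``\(S[pos]\) is inserted at some iteration with this \(pos\)''; the paper's converse in fact proves only that.

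The gap is in the converse. Your numbered three-step argument cannot work. Essentiality is a property of the index \(t_l\), not of the skipped character \(S[pos]\). By your own forward direction, whichever later iteration inserts the character that ends up at position \(t_l\) makes \(t_l\) essential in every subsequent \(S_{l'}\), in particular in \(S_{\text{end}}\). So no contradiction can come from \(S_{\text{end}}\) or from ``the character is absent''. The only instance of ``for every \(l'\ge l\)'' that can fail is \(S_l\) itself, and that is where the converse must be argued.

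Your last paragraph heads there, but its crux is asserted, not proved. You claim that any pair for which \(t_l\) is essential in \(S_l\) ``would, by the sorted order of \(r\_all\), be handled at an iteration with the same \(pos\)''. Sortedness alone does not give this, and it is exactly what the paper's converse supplies. Suppose \(t_l\) is essential for \((j,A_i)\) in \(S_l\), so \(\lembd{A_i}{S_l}(j)=t_l=\rembd{A_i}{S_l}(j)\) by \cref{lem:ess-check}. Let \(j_i\) be the pending index of \(A_i\).
\begin{itemize}
\item If \(j<j_i\), then \(\lembd{A_i}{S_l}(j)\le l\_all[i]\le |P_l|<t_l\). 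This uses item 2 of \cref{lem:k-str-loop-prop} and the fact that \(l\_all[i]\) always points into \(outp\). Sortedness does not touch this case.
\item If \(j>j_i\), then \(\rembd{A_i}{S_l}(j)>\rembd{A_i}{S_l}(j_i)\ge t_l\). Here pending entries of \(r\_all\) are \(\ge pos\), and item 1 of \cref{lem:k-str-loop-prop} turns \(\rembd{A_i}{S}(j_i)\ge pos\) into \(\rembd{A_i}{S_l}(j_i)\ge |P_l|+1\).
\end{itemize}
Hence \(j=j_i\), and item 1 now forces \(\rembd{A_i}{S}(j_i)=pos\). So the entry \((pos,i)\) is still pending and is processed before \(pos\) advances. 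At that iteration, unless \(S[pos]\) was already inserted at this \(pos\), \textsc{FindNext} returns \(\lembd{A_i}{S_l}(j_i)=|P_l|+1\), so line~\ref{line:k-insert} fires. Without this reduction of an arbitrary essential pair to the pending pair at \(pos\), your converse is not established.
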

\begin{proof}
 We start with the only if direction.
 Note that line 13 can only be executed after lines 8 to 13.
 Let \(i = str\_id\) and suppose has line 13 is executed.
 Note \(\rembd{A_{i}}{S}(j_i)=r\_idx\) where \(j_i\) is first index
 in \(A\) where \((\rembd{A_i}{S}(j_i),i)\in r\_all\intv{j\_all}{}\).
 After line 11 is executed, we set \(new\_top\) 
 to the next occurrence of \(c=A_i[j_i]\) after \(l\_all[i]\) thus
 \(new\_top = \lembd{A_{j}}{S_l}(j_i) \) by~\cref{lem:k-str-loop-prop}.
 Furthermore \(\lembd{A_{j}}{S_l}(j_i)=|P_l|+1\) as \(|outp|+1=|P_l|+1\) and line 12.
 \Cref{lem:k-str-loop-prop} is applicable again to see
 \(\rembd{A_i}{S_l}(j_i) = \rembd{A_i}{S}(j_i)-(pos-|P_l|-1)
 =|P_l|+1.\) Thus we see \(|P_l|+1\) is essential for \(A_j\) in \(S_l\).
 Finally, as \(\rembd{A_i}{S_{l'}}(j_i)\) may only decrease with respect to \(l'\) as \(S_{l'} \subseteq S_l\), we see
 \(\lembd{A_i}{S_{l'}}(j_i)=\rembd{A_i}{S_{l'}}(j_i)\)
 for all \(l' \geq l\).

 Now let \(|P_l|+1\) be essential for some pair \((j_i, A_i)\) in \(S_l\) thus \(\rembd{A_i}{S_l}(j_i) = |P_l|+1\).
 \Cref{lem:k-str-loop-prop} now implies \(j_i\) is in \(r\_all\intv{j\_all}{}\) as
 \(l\_all[i] \leq |P_l|\) at all
 times during execution as a consequence. Furthermore, it must be the first value for \(A_j\) in
 \(r\_all\intv{j\_all}{}\) as the right embedding is injective and, as another consequence of~\cref{lem:k-str-loop-prop}, we \(\rembd{A_j}{S}(j_i) = pos\).
\end{proof}
    


After the last iteration of~\cref{alg:k-reduce}, we are left
with \(P_l = S_l\).
Using this we now derive our main result for \(k\)-strings.

\begin{theorem}
    Given a common supersequence \(S\) of \(k\) strings \(A_1,\ldots, A_k \), a minimal common supersequence \(S' \subseteq S\) can
    be computed in \(O(N\log N)\) time where \(N\) is the total length of the input strings.
\end{theorem}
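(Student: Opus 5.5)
The plan has two parts: correctness of \cref{alg:k-reduce} via \cref{lem:k-str-loop-inv} and \cref{lem:ess-iff}, and a running-time count using the cost bounds of \cref{lem:ds-lem} together with the construction of \(r\_all\).

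\emph{Correctness.} By \cref{lem:k-str-loop-prop}, at the start of every iteration each \(A_i\) is a subsequence of \(S_l = P_l\cdot S\intv{pos}{|S|}\). When the loop terminates, \(pos = |S|+1\), so the final string \(P\) stored in \(outp\) equals the final \(S_l\). Hence \(P\) is a common supersequence of \(A_1,\ldots,A_k\), and \(P\subseteq S\) because it is obtained from \(S\) by skipping positions.

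Each index \(t\) of \(P\) was appended by an execution of line~\ref{line:k-insert} in some iteration \(l\), at which point \(t = |P_l|+1\). By \cref{lem:k-str-loop-inv}, \(t\) stays essential for some \(A_j\) in \(S_{l'}\) for every \(l'\ge l\), and in particular in the final string \(P\). So every index of \(P\) is essential, and \cref{lem:ess-iff} shows that \(P\) is a minimal common supersequence.

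Two points need care. First, appended positions are never revisited, so the prefix \(P_l\) is frozen from iteration \(l\) on. Second, the loop terminates: each iteration either increments \(pos\) or increments \(j\_all\). The latter happens at most \(N\) times, because \(r\_all\) has one sentinel-terminated entry per character of the input strings. Once \(r\_all\) is exhausted, the sentinel is never equal to \(pos\), so \(pos\) simply runs to \(|S|+1\).

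\emph{Running time.} Initialization consists of building the occurrence structure in \(O(|S|)\) time by \cref{lem:ds-lem}, computing the \(k\) right embeddings in \(O(|S|)\) time each, and merging them. Computing the embeddings is the delicate step, since a naive sweep per string costs \(O(k|S|)\). I would avoid this by first reducing \(S\) to at most \(N\) characters; \(A_1\cdots A_k\) is such a starting supersequence. With occurrence lists for \(S\), each right embedding is then computed by ``find previous occurrence'' binary searches in \(O(|A_i|\log |S|)\) time, for a total of \(O(N\log N)\). Sorting the merged pairs costs \(O(N\log N)\), or \(O(N)\) by bucket sort on the index.

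The main loop performs at most \(|S|+N = O(N)\) iterations. Each iteration does \(O(1)\) work plus at most one \textsc{FindNext}, which is a binary search costing \(O(\log N)\), and at most one \textsc{Insert}, which costs \(O(1)\). The final \textsc{BuildStr} costs \(O(|S|)\). Summing gives \(O(N\log N)\).

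The main obstacle is the dependence on \(|S|\) versus \(N\). If \(S\) is an arbitrary supersequence much longer than \(N\), the \(O(|S|)\) terms cannot be absorbed into the bound. I would handle this either by reading the statement with \(N\) bounding \(|S|\) as well, or by remarking that the sweep cost is \(O(|S| + N\log N)\), which is \(O(N\log N)\) whenever \(|S| = O(N)\), for instance for the concatenation \(A_1\cdots A_k\).
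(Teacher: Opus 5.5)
Your proposal is correct and follows the paper's own proof. Correctness comes from \cref{lem:k-str-loop-prop} and \cref{lem:k-str-loop-inv}, with \cref{lem:ess-iff}, which the paper uses implicitly, made explicit. The running time comes from charging $O(1)$ to each increment of $pos$ and $O(\log N)$ to each of the at most $N$ calls to \textsc{FindNext}. Your caveat that $|S| = O(N)$ is needed is exactly the assumption the paper makes silently when it bounds the $pos$-incrementing iterations by $O(N)$. Your binary-search computation of the right embeddings supplies the detail behind the paper's unsupported $O(N\log N)$ claim for \textsc{MergeRightEmbeddings}. One small point: replacing a general $S$ by $A_1\cdots A_k$ would not give $S'\subseteq S$, so that remark should be read only as the case where $S$ already has length at most $N$.
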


\begin{proof}
    By~\cref{lem:k-str-loop-inv} and~\cref{lem:k-str-loop-prop},
    every index in the output is essential for some \(A_i\), \(i \in [k]\), and every \(A_i\) is a subsequence of the output thus the
    algorithm computes an MCS correctly.
    
    We now examine the runtime. 
    Initialization of data structures from line~\ref{line:k-start-init} 
    to line~\ref{line:k-end-init} is linear hence the time
    complexity of these operations is dominated by the \(O(N \log N)\) time of $MergeRight$.
    Then, note that the iterations of~\cref{alg:k-reduce} can be partitioned
    into iterations which update \(pos\) and those that do not.
    Iterations which increments \(pos\) each takes \(O(1)\) time thus contribute
    \(O(N)\) time to execution.
    Iterations that do not increment \(pos\), increment \(j\_all[i]\)
    for some \(i\) and spend \(O(\log(N))\) time computing
    \textsc{FindNext} at line~\ref{line:k-find-next} by~\cref{lem:ds-lem}.
    For any \(i\), \(j\_all[i] \leq |A_i|\), we thus make
    \(\sum_{i\in[k]} |A_i| = N\) calls to \textsc{FindNext} 
    for a total of \(O(N\log N)\) time.
    Consequently the loop at line~\ref{line:k-main-loop} also
    requires \(O(N\log N)\) time to compute, thus the algorithm
    takes \(O(N \log N )\) time overall.
\end{proof}
As \(N \leq kn\), we arrive at the following time bound in terms of \(n\) and \(k\).

\begin{theorem}
    Given \(k\) strings \(A_1, \ldots, A_k\), 
    a minimal common supersequence of them
    can be found in \(O(kn(\log k+\log n))\) times.
\end{theorem}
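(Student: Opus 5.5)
The plan is to obtain this bound as a direct corollary of the preceding theorem, which reduces a given common supersequence \(S\) of \(A_1,\ldots,A_k\) to a minimal one in \(O(N\log N)\) time, where \(N=\sum_{i\in[k]}|A_i|\). We first need a common supersequence to feed into \cref{alg:k-reduce}. The natural choice, mirroring the two-string case, is the concatenation \(S=A_1\cdot A_2\cdots A_k\). Each \(A_i\) is a contiguous substring of \(S\), hence a subsequence, so \(S\) is a common supersequence of length exactly \(N\). It can be written down in \(O(N)\) time.

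Next we apply the preceding theorem to this \(S\). Its output \(S'\subseteq S\) is a minimal common supersequence, and the time spent is \(O(N\log N)\). We should check that the hidden dependence on \(|S|\) is harmless. The sweep over \(pos\), the construction of the occurrence-array structure of \cref{lem:ds-lem}, and \textsc{BuildStr} are all linear in \(|S|\). Since \(|S|=N\), these terms are absorbed. \textsc{MergeRightEmbeddings} builds each right embedding in \(O(|S|)\) per string by a backward sweep. That gives \(O(kN)\) naively, which is the only step needing care: to stay within \(O(N\log N)\), I would compute the right embedding of \(A_i\) only inside its own block of \(S\), i.e. the block occupied by \(A_i\) in the concatenation, where the embedding is the identity. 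I would then argue that this is in fact the right embedding into all of \(S\). Alternatively, we can simply rely on the stated \(O(N\log N)\) bound for \textsc{MergeRightEmbeddings} as asserted earlier. Either way, sorting the \(N\) pairs costs \(O(N\log N)\).

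Finally we substitute the bound \(N\le kn\), since each string has length at most \(n\). This gives \(N\log N\le kn\log(kn)=kn(\log k+\log n)\), which is the claimed \(O(kn(\log k+\log n))\) bound. The main obstacle is only the bookkeeping above: making sure no step of the algorithm silently costs \(\Theta(k|S|)\) when run on the concatenation. I expect the per-block right embedding observation, or the earlier lemma's accounting, to settle it.
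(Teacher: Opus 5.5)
Your main argument is the paper's. The theorem is a corollary of the \(O(N\log N)\) reduction theorem, applied to a common supersequence such as the concatenation \(A_1\cdots A_k\) (so \(|S|=N\)). You then use \(N\le kn\), which gives \(N\log N\le kn\log(kn)=kn(\log k+\log n)\).

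Drop the per-block shortcut you proposed for \textsc{MergeRightEmbeddings}, because the claim behind it is false. The right embedding of \(A_i\) into the concatenation generally lands in later blocks, not in \(A_i\)'s own block. For example, take \(A_1=b\), \(A_2=ab\) and \(S=bab\): the right embedding of \(A_1\) sends its only index to \(3\), not to \(1\).

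This is not harmless bookkeeping. If you feed the block-identity embeddings into the \(k\)-string reduction algorithm on this instance, it inserts all three characters and returns \(bab\). That output is not minimal, since \(ab\subset bab\) already contains both strings.

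The paper's asserted \(O(N\log N)\) bound for \textsc{MergeRightEmbeddings}, your ``alternatively'' branch, is the right justification. To back it up concretely:
\begin{itemize}
\item store, for each character, the sorted list of its positions in \(S\), built in one scan;
\item compute each right embedding right to left, using one predecessor (binary) search per character of \(A_i\);
\item this gives \(O(N\log N)\) in total, followed by sorting the \(N\) pairs.
\end{itemize}
With that branch in place of the shortcut, your proof is complete.
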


\section{Concluding Remarks}

We present a linear time algorithm to compute a minimal common supersequence between two strings. For \(k\) input
strings, each with length at most $n$, a similar method
gives an $O(kn(\log k+\log n))$ time algorithm. For the problem of enumerating minimal common supersequences on two strings each with length at most $n$, we built an $O(n^2)$-space data structure in $O(n^3)$ time such that each minimal common supersequence can be enumerated with an $O(n)$ time delay. An interesting question is if some constrained version can also be computed efficiently; for example, what if the computed minimal common supersequence must not contain a string $P$ as a subsequence.


\end{document}